\newcommand{\change}[1]{{#1}}
\newcommand{\fix}[1]{{#1}}
\newcommand{\noise}{\omega}
\newcommand{\para}[1]{\vspace{9pt}\noindent{\textbf{#1. }}}
\newcommand{\framework}{Gleipnir}
\begin{document}

\title{\framework{}: Toward Practical Error Analysis for Quantum Programs (Extended Version)}

\author{Runzhou Tao}
\affiliation{
  \institution{Columbia University}            %
  \city{New York}
  \state{NY}
  \country{USA}                    %
}
\email{runzhou.tao@columbia.edu}          %

\author{Yunong Shi}
\authornote{Now affiliated with Amazon.}                                        %
\affiliation{
  \institution{The University of Chicago}            %
  \city{Chicago}
  \state{IL}
  \country{USA}                    %
}
\email{yunong@uchicago.edu}          %

\author{Jianan Yao}
\affiliation{
  \institution{Columbia University}            %
  \city{New York}
  \state{NY}
  \country{USA}                    %
}
\email{jy3022@columbia.edu}          %
\author{John Hui}
\affiliation{
  \institution{Columbia University}            %
  \city{New York}
  \state{NY}
  \country{USA}                    %
}
\email{j-hui@cs.columbia.edu}          %

\author{Frederic T. Chong}
\authornote{F. Chong is also Chief Scientist at Super.tech and an advisor to Quantum Circuits, Inc.}
\affiliation{
  \institution{The University of Chicago}            %
}
\affiliation{
  \institution{Super.tech}            %
  \city{Chicago}
  \state{IL}
  \country{USA}                    %
}
\email{chong@cs.uchicago.edu}          %

\author{Ronghui Gu}
\affiliation{
  \institution{Columbia University}            %
}
\affiliation{
  \city{New York}
  \state{NY}
  \country{USA}                    %
}
\email{ronghui.gu@columbia.edu}          %

\begin{abstract}
    Practical
    error analysis is essential for the design, optimization, 
    and evaluation of {Noisy Intermediate-Scale Quantum} (NISQ) computing. 
    However, bounding errors in quantum programs is a grand challenge,
    because the effects of quantum errors depend on
    exponentially large quantum states.
    In this work, we present \framework{}, a novel methodology
    toward practically computing verified error bounds in quantum programs.
    \framework{} introduces the $(\hat\rho,\delta)$-diamond norm,
    an error metric constrained by a quantum predicate consisting of 
    the approximate state $\hat\rho$ and its distance $\delta$ to the ideal state $\rho$.
    This predicate $(\hat\rho,\delta)$ can be computed {adaptively}
    using tensor networks based on Matrix Product States.
    \framework{} features a lightweight logic for reasoning
    about error bounds in noisy quantum programs,
    based on the $(\hat\rho,\delta)$-diamond norm metric.
    Our experimental results show that \framework{} is able to efficiently 
    generate tight error bounds for real-world quantum programs with 10 to 100 qubits,
    and can be used to 
    evaluate the error mitigation performance of
    quantum compiler transformations.

\end{abstract}

\keywords{quantum programming, error analysis, approximate computing}  %

\maketitle

\section{Introduction}

Recent quantum supremacy experiments~\cite{supremacy_2019}
have heralded
the {Noisy Intermediate-Scale Quantum} (NISQ) era~\cite{Preskill2018},
where %
noisy quantum computers with 50-100 qubits are used to achieve
tangible performance gains over classical computers.
While this goal is promising, there remains the engineering challenge of
accounting for erroneous quantum operations on noisy hardware~\cite{7927104}.
Compared to classical bits,
quantum bits (qubits) are much more fragile and error-prone.
The theory of Quantum Error Correction 
(QEC)~\cite{Devitt_2013,gottesman2009introduction,NielsenChuang,preskill1998lecture,Preskill_1998}
enables fault tolerant
computation~\cite{gottesman2009introduction,Campbell2017,preskill1997faulttolerant}
using redundant qubits,
but full fault tolerance is still prohibitively expensive for modern noisy devices---some
$10^3$ to $10^4$ physical qubits are required to encode
a single logical qubit~\cite{Knill_2005,Fowler_2012}.

To reconcile quantum computation with
\change{NISQ} computers,
quantum compilers perform transformations for
error mitigation~\cite{Wallman_2016}
and noise-adaptive optimization~\cite{murali2019noiseadaptive}.
To evaluate these compiler transformations,
we must compare the error bounds
of the source and compiled quantum programs.

Analyzing the error of quantum programs, however, is practically challenging.
Although one can naively calculate the ``distance'' (i.e., error) between the ideal and noisy outputs
using their matrix representations~\cite{NielsenChuang},
this approach is impractical for real-world quantum programs,
whose %
matrix representations %
can be exponentially large---for example, a 20-qubit quantum circuit is represented by
a $2^{20} \times 2^{20}$ matrix---too large to feasibly compute.

Rather than directly calculating the output error using matrix representations,
an alternative approach employs \emph{error metrics}, which can be computed more efficiently.
A common error metric for quantum programs is
the unconstrained diamond norm~\cite{aharonov1998quantum}.
However, this metric merely gives a {\it worst-case} error analysis:
it is calculated only using quantum gates' noise models,
without taking into account any information about the quantum state.
In extreme cases, 
it overestimates errors by up to two orders of magnitude~\cite{Wallman_2014}.
A more realistic metric must %
take the input quantum state into account,
since this also affects the output error.

The logic of quantum robustness (LQR)~\cite{hung2019} %
incorporates quantum states in the error metrics to compute tighter error bounds.
This work introduces the $(Q,\lambda)$-diamond norm,
which analyzes the output error given that
the input quantum state satisfies some
quantum predicate $Q$ to degree $\lambda$.
LQR
extends the Quantum Hoare Logic~\cite{ying2016foundations}
with the $(Q,\lambda)$-diamond norm
to produce
logical judgments of the form $(Q, \lambda)\vdash\widetilde{P}\leq\epsilon$,
which deduces the error bound $\epsilon$ for a noisy program $\widetilde{P}$.
While theoretically promising, this work raises open questions in practice.
Consider the following sequence rule in LQR: %
\[
\inferrule{
(Q_1, \lambda)\vdash\widetilde{P_1}\leq\epsilon_1
\and \{Q_1\}P_1\{Q_2\}
\and (Q_2, \lambda)\vdash\widetilde{P_2}\leq\epsilon_2
}{(Q_1, \lambda)\vdash(\widetilde{P_1};\widetilde{P_2})\leq\epsilon_1+\epsilon_2}.
\]
It is unclear how to obtain a
quantum predicate $Q_2$
that is  a valid postcondition after executing $P_1$
while being strong enough to produce useful error bounds for $\widetilde{P_2}$.

This paper presents \framework{},
an adaptive error analysis %
methodology for quantum programs that addresses
the above practical challenges and
answers the following three open questions:
(1) How to compute suitable constraints for the input quantum state used by the error metrics?
(2) How to reason about error bounds without manually verifying quantum programs
with respect to pre- and postconditions?
(3) How practical is it to
compute verified error bounds for quantum programs
and evaluate the error mitigation performance of quantum compiler transformations? %

First, 
in prior work,
seaching for
a non-trivial postcondition $(Q,\lambda)$ for a given quantum program
is prohibitively costly:
existing methods either compute  postconditions
by fully simulating quantum programs using 
matrix representations~\cite{ying2016foundations}, %
or reduce this problem to an SDP 
(Semi-Definite Programming) problem whose size is exponential to
the number of qubits used in the quantum program~\cite{ying2017invariants}.
In practice,
for large quantum programs ($\geq$ 20 qubits),
these methods %
cannot produce any postconditions other than
$(I, 0)$
(i.e., the identity matrix $I$ to degree 0, %
analogous to a ``true'' predicate),
reducing the $(Q,\lambda)$-diamond norm to the unconstrained diamond norm
and failing to yield non-trivial error bounds.

To overcome this limitation,
\framework{} introduces the  $(\hat{\rho}, \delta)$-diamond norm,
a new error metric 
for input quantum states whose
distance 
from some \emph{approximated} quantum state $\hat{\rho}$
is bounded by $\delta$.
Given a quantum program and a predicate $(\hat{\rho}, \delta)$,
\framework{} computes its diamond norm by
reducing it to a constant size SDP problem.

To obtain the predicate $(\hat{\rho}, \delta)$,
\framework{} uses Matrix Product State (MPS) tensor networks \cite{mps} %
to represent and approximate quantum states.
Rather than fully simulating the quantum program or
producing an exponentially complex SDP problem,
our MPS-based approach computes a tensor network $TN(\rho_0, P)$ that
\emph{approximates} $(\hat{\rho}, \delta)$
for some input state $\rho_0$ and program $P$.
By \emph{dropping} insignificant singular values when exceeding
the given MPS size
during the approximation, $TN(\rho_0, P)$ can be computed in  polynomial time
with respect to the size of the MPS tensor network,
the number of qubits, and the number of quantum gates. %
In contrast with prior work,
our MPS-based approach is \emph{adaptive}---one may adjust the approximation precision by varying the size of the MPS
such that %
tighter error bounds can be computed using greater computational 
resources.
\framework{} provides more flexibility between
the tight but inefficient full simulation
and the efficient but unrealistic worst-case analysis.

Second, instead of verifying a  predicate using Quantum Hoare Logic,
\framework{} develops a %
lightweight logic 
based on $(\hat{\rho}, \delta)$-diamond norms
for reasoning about quantum program error, using judgments of the form:
\[
    (\hat{\rho}, \delta)\vdash\widetilde{P}_\noise\leq\epsilon.
\]
This judgement states that the error of the noisy program $\widetilde{P}_\noise$ 
under the  noise model  $\noise$
is upper-bounded by $\epsilon$ 
when the input state is constrained by
$(\hat{\rho}, \delta)$.
As shown in the sequence rule of our quantum error logic:
\begin{small}
\[
\inferrule{
(\hat{\rho}, \delta)\vdash \widetilde{P}_{1\noise}\leq \epsilon_1
\quad\ TN(\hat{\rho}, P_1) = (\hat{\rho}', \delta')
\quad\
(\hat{\rho}', \delta + \delta')\vdash\widetilde{P}_{2\noise}\leq\epsilon_2
}{(\hat{\rho}, \delta)\vdash\ \widetilde{P}_{1\noise};\widetilde{P}_{2\noise}\ \leq\epsilon_1 + \epsilon_2},
\]
\end{small}%
the approximated state $\hat{\rho}'$ and its distance $\delta'$
are  \emph{computed} using the MPS tensor network $TN$.

our sequence rule eliminates the cost of searching for and
validating non-trivial postconditions by directly computing $(\hat{\rho}, \delta)$.
We prove the correctness of $TN$, which ensures that
the resulting state of executing $P_1$ satisfies
the  predicate $(\hat{\rho}', \delta + \delta')$.

Third, we enable the practical error analysis of quantum programs and transformations,
which was previously only theoretically possible but infeasible 
due to the limitations of prior work.
To understand the scalability and limitation of our error analysis methodology, 
we conducted case studies using %
two classes of quantum programs that are expected to be most useful in the near-term---the Quantum Approximate Optimization Algorithm~\cite{farhi2014quantum}
and the Ising model~\cite{google2020hartree}---with qubits ranging from 10 to 100.
Our measurements show that, with 128-wide MPS networks,
\framework{} can always generate error bounds within 6 minutes.
For small programs ($\leq$ 10 qubits), \framework{}'s error bounds
\change{are} as precise as the ones generated using full simulation.
For  large programs ($\geq$ 20 qubits),
\framework{}'s error bounds
are $15\%$ to $30\%$ tighter than
those calculated using unconstrained diamond norms,
while full simulation invariably times out after
24 hours.

We %
explored \framework{}'s effectiveness in  evaluating
the error mitigation performance of quantum compiler transformations.
We conducted a case study evaluating
qubit mapping protocols~\cite{murali2019noiseadaptive}
and showed that
the %
ranking for different transformations using
the error bounds generated by our methodology
is consistent with the ranking using errors measured from the real-world experimental data.

Throughout this paper, we
address the key practical limitations of error analysis for quantum programs.
In summary, our main contributions  are:

\begin{itemize}
  \setlength\itemsep{0.5em}
\item The $(\hat{\rho}, \delta)$-diamond norm,
    a new error metric constrained by the input quantum state
    that can be efficiently computed using constant-size SDPs.
\item An MPS tensor network approach
    to adaptively compute the quatum predicate $(\hat{\rho}, \delta)$.
    
\item A lightweight logic for 
reasoning about quantum error bounds
without the need to verify quantum predicates.
\item Case studies using quantum programs and transformations on real quantum devices,
    demonstrating the feasability of adaptive quantum error analysis for
    computing verified error bounds for quantum programs and
    evaluating the error mitigation \fix{performance} of quantum compilation.
\end{itemize}

\section{Quantum Programming Background}

This section introduces  basic notations and terminologies for quantum programming 
that will be used throughout the paper.
\fix{Please refer to \citet{r3} for more detailed background knowledge.}

\vspace{-5pt}
\para{Notation}
In this paper, we use Dirac notation, or ``bra-ket'' notation, to represent quantum states.
The ``ket'' notation $\ket{\psi}$ denotes a column vector, which corresponds to a pure quantum state.
The ``bra'' notation $\bra{\psi}$ denotes its conjugate transpose, a row vector.
$\braket{\phi|\psi}$ represents the inner product of two vectors,
and $\ket{\psi}\bra{\phi}$ the outer product. 
We use $\rho$ to denote a density matrix \fix{(defined in Section~\ref{sec:basic})}, a matrix that represents a mixed quantum state.
$U$ usually denotes a unitary matrix which represents quantum gates,
while $U^\dag$ denotes its conjugate transpose.
Curly letters such as $\mathcal{U}$ denote noisy or ideal quantum operations,
represented by maps between density matrices (superoperators).
Upper case Greek letters such as $\Phi$ represent quantum noise as superoperators.

\subsection{Quantum computing basics}
\label{sec:basic}
\vspace{-5pt}
\para{Quantum states}
The simplest quantum state is a quantum bit---a \emph{qubit}.
Unlike a classical bit, a qubit's state can be the superposition of two logical states,
$\ket{0}$ and $\ket{1}$, that correspond to classical logical states $0$ and $1$.
In general, a qubit is a unit vector in the 2-dimensional Hilbert space $\mathbb{C}^2$,
with $\ket{0} := [1, 0]^\dag$ and $\ket{1} := [0,1]^\dag$.
In Dirac's notation, we represent a qubit as $\ket{\psi} = \alpha\ket{0} + \beta\ket{1}$,
where $|\alpha|^2 + |\beta|^2 = 1$. 

Generally speaking, the state of a quantum program may comprise many qubits.
An $n$-qubit state can be represented by a unit vector
in $2^n$-dimensional Hilbert space $\mathbb{C}^{2^n}$.
For example, a 3-qubit state can be described by an $8$-dimensional complex vector,
which captures a superposition of 8 basis states,
$\ket{000}$, $\ket{001}$, $\ket{010}$, $\ldots$, $\ket{111}$.

Besides the pure quantum states described above,
there are also \emph{classically mixed} quantum states, i.e., noisy states.
An $n$-qubit mixed state can be represented by a $2^n \times 2^n$ \emph{density matrix}
$\rho = \sum_i p_i \ket{\phi_i} \bra{\phi_i}$,
which states that the state has $p_i$ probability to be $\ket{\phi_i}$.
For example, a mixed state with half probability of $\ket{0}$ and $\ket{1}$
can be represented by $\frac{\ket{0}\bra{0}+\ket{1}\bra{1}}{2} = I/2$,
where $I$ is the identity matrix.

\para{Quantum gates}
Quantum states are manipulated by the application of \emph{quantum gates},
described by unitary matrix representations~\cite{NielsenChuang}.
Figure~\ref{fig:gatematrix} shows the matrix representations of 
some common  gates.
Applying an operator $U$ to a quantum state $\ket{\phi}$ results in the state $U\ket{\phi}$,
and applying it to a density matrix $\rho = \sum_i p_i \ket{\phi_i} \bra{\phi_i}$ gives $U \rho U^{\dag}$.
For example, the bit flip gate $X$ maps $\ket{0}$ to $\ket{1}$ and $\ket{1}$ to $\ket{0}$,
while the Hadamard gate $H$ maps $\ket{0}$ to $\frac{\ket{0} + \ket{1}}{\sqrt{2}}$.
There are also multi-qubit gates, such as $CNOT$,
which does not change $\ket{00}$ and $\ket{01}$
but maps $\ket{10}$ and $\ket{11}$ to each other.
Applying a gate on a subset of qubits will not change other qubits.
For example, applying the $X$ gate to the first qubit of $\frac{\ket{00} + \ket{11}}{\sqrt{2}}$
will result in  $\frac{\ket{10} + \ket{01}}{\sqrt{2}}$.
This can be seen as an extension $X \otimes I$ of the matrix to a larger space
using a tensor product. %

\begin{figure}[t]
    \centering \small
        $ X = \begin{bmatrix} 0 & 1 \\ 1 & 0 \end{bmatrix},\
        Z = \begin{bmatrix} 1 & 0 \\ 0 & -1 \end{bmatrix},\
        H =\frac{1}{\sqrt{2}}\begin{bmatrix} 1 & 1 \\ 1 & -1 \end{bmatrix}$,\
        $ CNOT = $ \scriptsize\setlength\arraycolsep{2pt}
                $\begin{bmatrix} 1 & 0 & 0 & 0 \\ 0 & 1 & 0 & 0 \\ 0 & 0 & 0 & 1 \\ 0 & 0 & 1 & 0 \end{bmatrix}$
    \caption{Matrix representations of common quantum gates.
    $X$ denotes a bit flip, $Z$ denotes a phase flip,  $H$ denotes a Hadamard gate,
    and $CNOT$ denotes a controlled NOT gate.
    }
    \label{fig:gatematrix}
\end{figure}

\para{Quantum measurements}
Measurements extract classical information from quantum states
and collapse the quantum state according to \emph{projection matrices}
$M_0$ and $M_1$.
When we measure some state $\rho$,
we will obtain the result $0$ with collapsed state
$M_0 \rho M_0^\dag / p_0$ and probability $p_0 = \mathrm{tr} (M_0 \rho M_0^\dag)$,
or the result $1$ with collapsed state $M_1 \rho M_1^\dag / p_1$
and probability $p_1 = \mathrm{tr} (M_1 \rho M_1^\dag)$.

Both quantum gates and quantum measurements act linearly on density matrices
and can be expressed as \emph{superoperators}, which are
completely positive trace-preserving maps
$\mathcal{E} \in L(\mathcal{H}) : \mathcal{H}_n\rightarrow\mathcal{H}_m$,
where $\mathcal{H}_{n}$ is the density matrix space of dimension $n$
and $L$ is the space of linear operators.

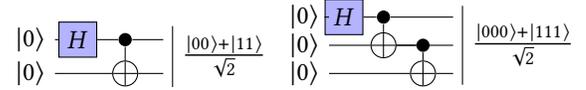
\begin{figure}[t]
\begin{subfigure}[b]{.2\textwidth}
\centering
    \begin{tikzpicture}[scale=0.9]
    \tikzstyle{operator} = [draw,fill=blue!30] 
    \tikzstyle{phase} = [fill,shape=circle,minimum size=5pt,inner sep=0pt]
    \tikzstyle{surround} = [fill=blue!10,thick,draw=black,rounded corners=2mm]
    \tikzset{XOR/.style={draw,circle,append after command={
        [shorten >=\pgflinewidth, shorten <=\pgflinewidth,]
        (\tikzlastnode.north) edge (\tikzlastnode.south)
        (\tikzlastnode.east) edge (\tikzlastnode.west)
    } } }
    \node at (0,0) (q1) {$\ket{0}$};
    \node at (0,-0.5) (q2) {$\ket{0}$};
    \node[operator] (op11) at (0.7,0) {$H$} edge [-] (q1);
    \node[phase] (phase11) at (1.4,0) {} edge [-] (op11);
    \node[XOR] (phase12) at (1.4,-0.5) {} edge [-] (q2);
    \draw[-] (phase11) -- (phase12);
    \node (end1) at (2.1,0) {} edge [-] (phase11);
    \node (end2) at (2.1,-0.5) {} edge [-] (phase12);
    \draw (2.1,0.2) to
	node[midway,right] (bracket) {$\frac{\ket{00}+\ket{11}}{\sqrt{2}}$}
	(2.1,-0.7);
    \end{tikzpicture}
\end{subfigure}
\begin{subfigure}[b]{.2\textwidth}
\centering
    \begin{tikzpicture}[scale=0.75]
        \tikzstyle{operator} = [draw,fill=blue!30] 
        \tikzstyle{phase} = [fill,shape=circle,minimum size=5pt,inner sep=0pt]
        \tikzstyle{surround} = [fill=blue!10,thick,draw=black,rounded corners=2mm]
        \tikzset{XOR/.style={draw,circle,append after command={
            [shorten >=\pgflinewidth, shorten <=\pgflinewidth,]
            (\tikzlastnode.north) edge (\tikzlastnode.south)
            (\tikzlastnode.east) edge (\tikzlastnode.west)
            }
            
        }
    }
        \node at (0,0) (q1) {$\ket{0}$};
        \node at (0,-0.5) (q2) {$\ket{0}$};
        \node at (0,-1) (q3) {$\ket{0}$};
        \node[operator] (op11) at (0.7,0) {$H$} edge [-] (q1);
        \node[phase] (phase11) at (1.4,0) {} edge [-] (op11);
        \node[XOR] (phase12) at (1.4,-0.5) {} edge [-] (q2);
        \draw[-] (phase11) -- (phase12);
        \node[phase] (phase22) at (2.1,-0.5) {} edge [-] (phase12);
        \node[XOR] (phase23) at (2.1,-1) {} edge [-] (q3);
        \draw[-] (phase22) -- (phase23);
        \node (end1) at (2.8,0) {} edge [-] (phase11);
        \node (end2) at (2.8,-0.5) {} edge [-] (phase12);
        \node (end3) at (2.8,-1) {} edge [-] (phase23);
        \draw (2.8,0.2) to
    	node[midway,right] (bracket) {$\frac{\ket{000}+\ket{111}}{\sqrt{2}}$}
    	(2.8,-1.2);
    \end{tikzpicture}
\end{subfigure}

\caption{
Two quantum circuits,
producing the 2-qubit (left) and 3-qubit (right) GHZ states.
}
\label{fig:ghz}
\end{figure}

\subsection{Quantum programs}
\label{sec:qprog}
Quantum programs comprise a configuration of quantum gates and measurements,
called a quantum circuit.
Graphically, qubits are represented as wires, and gates as boxes joining the wires;
CNOT gates are represented by a dot on the control qubit linked
with an $\oplus$ on the other qubit.

\begin{example}[GHZ circuit]
\label{ex:ghz}
The Greenberger–Horne–\\Zeilinger (GHZ) state~\cite{greenberger1989going} is
a class of entangled quantum states
used in many quantum communication protocols~\cite{hillery1999quantum}.
The simplest GHZ state is the 2-qubit GHZ state,
which is $\frac{\ket{00} + \ket{11}}{\sqrt{2}}$ in Dirac notation.
Figure \ref{fig:ghz} shows a typical graphical representation of a quantum circuit
that produces the 2-qubit GHZ state.
\end{example}

\para{Syntax}
The %
syntax of quantum programs is as follows:
\begin{align*}
        P &::=\  \texttt{skip} \ \mid \ 
            P_1; P_2 \  \mid \ 
            U (q_1, \ldots, q_k) \ \\ & \mid \ 
            \texttt{if } q = \ket{0} \texttt{then } P_0 \texttt{ else } P_1.
\end{align*}

\noindent{}Each component behaves similarly to its classical counterpart:
\texttt{skip} denotes the empty program;
$P_1; P_2$ sequences programs;
$U (q_1, \ldots, q_k)$ applies the $k$-qubit gate $U$ to the qubits $q_1, \ldots, q_k$;
$\texttt{if } q = \ket{0} \texttt{ then } P_0 \texttt{ else } P_1$ measures the qubit $q$,
executes $P_0$ if the result is $0$, and executes $P_1$ otherwise.
The difference between classical and quantum programs is that
the measurement in the \texttt{if} statement will collapse the state,
and the branch is executed on the collapsed state.
Using this syntax, the 2-qubit GHZ state circuit in Figure~\ref{fig:ghz} is written as:
\[
    H(q_0); CNOT(q_0, q_1).
\]

\noindent{}Note that this
work  currently does not  consider advanced quantum program constructs such as quantum loops,
as these are not likely to be supported on near-term quantum machines.

\para{Denotational semantics}
The denotational semantics of quantum programs are defined as superoperators acting on density matrices $\rho$, shown in Figure~\ref{fig:semantics}.
An empty program keeps the state unchanged;
a sequence of operations are applied to the state one by one;
a single quantum gate is directly applied as a superoperator\footnote{\fix{The matrix $U$ in Figure~\ref{fig:semantics} denotes the gate matrix (like in Figure~\ref{fig:gatematrix}) extended with identity operator on unaffected qubits.}};
a measurement branch statement maps the state into a classical mix
of the two results from executing the two branches.

\begin{figure}[t]
\begin{align*}
    [\![ \texttt{skip} ]\!] (\rho) :=& \rho \\
    [\![ P_1; P_2 ]\!] (\rho) :=& [\![ P_2 ]\!] ([\![ P_1 ]\!] (\rho)) \\
    [\![ U (q_1, \ldots, q_k) ]\!] (\rho) :=& U\rho U^\dag \\
    [\![ \texttt{if } q = \ket{0} \texttt{then } P_0 \texttt{ else } P_1 ]\!] (\rho) :=& [\![ P_0 ]\!] (M_0 \rho M_0^\dag) \ + \\ & [\![ P_1 ]\!] (M_1 \rho M_1^\dag)
\end{align*}
\vspace{-10pt}
\caption{Denotational semantics of quantum programs.}
\label{fig:semantics}
\end{figure}

\subsection{Quantum errors}
\label{sec:qerr}
Quantum programs are always noisy, and that noise may (undesirably) perturb the quantum state.
For example, the bit flip noise flips the state of a qubit with probability $p$.
This noise can be represented by a superoperator $\Phi$ such that:
$$\Phi(\rho) = (1-p)\rho + pX\rho X,$$
i.e., the state remains the same with probability $1-p$
and changes to $X\rho X$ with probability $p$,
where $X$ is the matrix representation of the bit flip gate (see Figure~\ref{fig:gatematrix}).
Generally, all effects from quantum noise can be represented by superoperators.

\para{Noisy quantum programs}
The noise model $\noise$ specifies the noisy version $\widetilde{U}_\noise$ of each gate $U$ on the \fix{target} noisy device,
used to specify  noisy quantum programs $\widetilde{P}_\noise$.
The noisy semantics $[\![ P ]\!]_\noise$ of program $P$ can be defined
as the semantics $[\![ \widetilde{P}_\noise ]\!]$ of the noisy program
$\widetilde{P}_\noise$,
whose semantics are similar to that of a noiseless program.
The rules of skip, sequence, and measurement statements remain the same,
while for gate application, the noisy version of each gate is applied as follows:
\[
    [\![ U (q_1, \ldots, q_k) ]\!]_\noise (\rho) =
    [\![ \widetilde{U}_\noise (q_1, \ldots, q_k) ]\!] (\rho) =
    \widetilde{\mathcal{U}}_\noise(\rho),
\]
where $\widetilde{\mathcal{U}}_\noise$ is the superoperator representation of %
$\widetilde{U}_\noise$.

\para{Metrics for quantum errors}
To quantitatively evaluate the effect of noise,
we need to measure some notion of ``distance'' between quantum states.
The \emph{trace distance} $\| \rho_n - \rho_{\text{Id}} \|_1$ measures the distance between the noisy state $\rho_{n}$
and the ideal, noiseless state $\rho_{\text{Id}}$: 
\begin{align*}
   \|\rho_{n} - \rho_{\text{Id}}\|_1=\max_P \fix{\text{ tr}}( P(\rho_{n}-\rho_{\text{Id}})),
\end{align*}
\noindent{}where $P$ is a positive semidefinite matrix with trace $1$ and $\text{tr}$ denotes the trace of a matrix.
The trace distance can be seen as a special case of the Schatten-$p$ norm $||\cdot ||_{p}$, defined as:
\begin{align*}
    ||\cdot||_p := \Big(\text{tr}(\cdot^\dagger \cdot)^\frac{p}{2}\Big)^\frac{1}{p}.
\end{align*}
\noindent{}The trace distance measures the maximum statistical distance
over all possible measurements of two quantum states.
Note that trace distance cannot be directly calculated
\fix{without complete information of the two quantum states}.

The \emph{diamond norm} metric is typically used to obtain
a {\it worst case} error bound.
The diamond norm between two superoperators $\mathcal{U}$ and $\mathcal{E}$ is defined as:
\begin{align*}
    ||\mathcal{U}- \mathcal{E}||_{\diamond}
    \quad =& \quad
            \max_{\rho :\ \text{tr}(\rho)=1}
            \frac{1}{2}\|\mathcal{U} \otimes \mathcal{I}(\rho) -\mathcal{E} \otimes\mathcal{I}(\rho) \|_{1},
\end{align*}
\noindent{}\fix{where $\mathcal{I}$ is the identity superoperator over some auxiliary space.} Intuitively, this formula calculates the maximum trace distance between the output state
after applying the erroneous operation versus applying the noiseless operation,
for any {\it arbitrary} input state.
Diamond norms can be efficiently computed by
simple Semi-Definite Programs (SDP)~\cite{watrous2013simpler}. \fix{Please refer to \citet{sdp} for more  background on SDP}. 

However, as shown by the Wallman-Flammia bound \cite{Wallman_2014},
diamond norms may overestimate errors by up to two orders of magnitude,
precluding its application in more precise analyses of noisy quantum programs.
The diamond norm metric fails to incorporate information about the
quantum state of the circuit that may help tighten the error bound.
For example,
a bit flip error ($X$ gate) does nothing to the state $\frac{\sqrt{2}}{2}\big(\ket{0}+\ket{1}\big)$
(the state is unchanged after flipping $\ket{0}$ and $\ket{1}$),
but flips the $\ket{1}$ state to $\ket{0}$.
However, both trace distance and diamond norm are agnostic to the input state,
and thus limit our ability to %
tightly bound the errors of quantum circuits.

$(Q, \lambda)-$diamond norm \cite{hung2019} is a more fine-grained metric:
\begin{small}
\begin{align*}
    ||\mathcal{U}- \mathcal{E}||_{(Q, \lambda)}
     &:=
            \max_{\rho :\ \text{tr}(\rho)=1, \text{tr}(Q\rho) \ge \lambda}
            \frac{1}{2}\|\mathcal{U} \otimes \mathcal{I}(\rho) -\mathcal{E} \otimes\mathcal{I}(\rho) \|_1.
\end{align*}
\end{small}%
\noindent{}Unlike the %
unconstrained diamond norm, 
the $(Q, \lambda)-$diamond norm
constrains  the input state to satisfy the predicate $Q$\fix{, a positive semidefinite and trace-1 matrix,} to degree $\lambda$;
specifically, the input state $\rho$ must satisfy $\text{tr}(Q\rho) \ge \lambda$.
The $(Q, \lambda)-$ diamond norm may produce tighter error bounds than
the unconstrained diamond norm by utilizing quantum state information,
but leaves open the problem of
practically computing a non-trivial predicate $Q$.

\section{\framework{} Workflow}
\label{sec:framework}

\tikzset{
    punkt/.style={
           rectangle,
           rounded corners,
           draw=black, very thick,
           text width=5.em,
           minimum height=2em,
           text centered},
    pil/.style={
           ->,
           thick,
           shorten <=2pt,
           shorten >=2pt,}
}

\begin{figure}
    \centering
    \includegraphics[width=\linewidth]{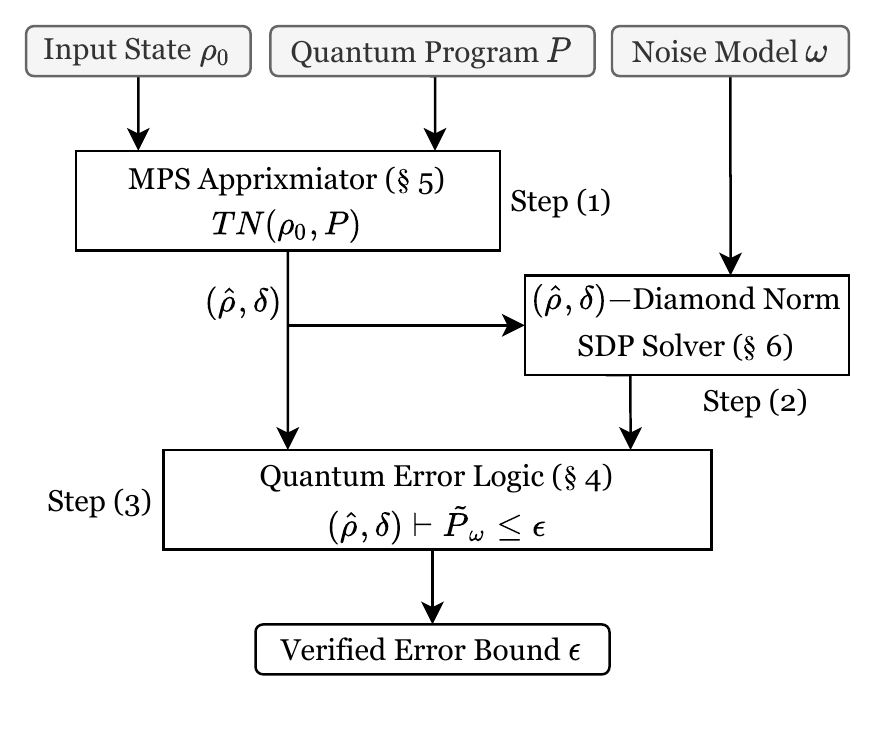}
    \vspace{-15pt}
    \caption{\framework{} workflow. %
    }
    \label{fig:workflow}
\end{figure}

\begin{figure*}[t]
\begin{mathpar}
\hspace{-20pt}
\inferrule*[Right=\textsc{Skip}]
{ }
{(\hat{\rho}, \delta) \vdash \widetilde{P}_\noise \leq 0}
\and
\inferrule*[Right=\textsc{Gate}]
{\|\widetilde{\mathcal{U}}_\noise - \mathcal{U} \|_{(\hat\rho, \delta)} \le \epsilon}
{(\hat{\rho}, \delta)\vdash\ \widetilde{U}_\noise(q_1, \ldots)\ \leq\epsilon}
\and
\inferrule* [Right=\textsc{Seq}]
{(\hat{\rho}, \delta)\vdash \widetilde{P}_{1\noise}\leq \epsilon_1
\quad
{TN}(\hat{\rho}, P_1) = (\hat{\rho}', \delta')
\quad 
(\hat{\rho}', \delta + \delta')\vdash\widetilde{P}_{2\noise}\leq\epsilon_2}
{(\hat{\rho}, \delta)\vdash\ \widetilde{P}_{1\noise};\widetilde{P}_{2\noise}\ \leq\epsilon_1 + \epsilon_2}
\and
\inferrule*[Right=\textsc{Weaken}]
{(\hat{\rho}, \delta')\vdash\widetilde{P}_\noise\leq\epsilon'
\\
\epsilon' \le \epsilon
\\ \delta' \ge \delta}
{(\hat{\rho}, \delta)\vdash\widetilde{P}_\noise\leq\epsilon} 
\and\qquad
\inferrule*[Right=\textsc{Meas}]
{(\hat{\rho}_0, \delta)\vdash\widetilde{P}_{0\noise}\leq\epsilon
\\
(\hat{\rho}_1, \delta)\vdash\widetilde{P}_{1\noise}\leq\epsilon}
{(\hat{\rho}, \delta)\vdash\ \Big(\texttt{if } q = \ket{0} \texttt{then } \widetilde{P}_{0\noise} \texttt{ else } \widetilde{P}_{1\noise} \Big)\ \leq(1-\delta)\epsilon + \delta}
\end{mathpar}

    \caption{Inference rules of the quantum error logic.}
    \vspace{10pt}
    \label{fig:rules}
\end{figure*}

To use the input
quantum state to tighten the computed error bound,
\framework{} introduces a new constrained diamond norm,
$(\hat{\rho}, \delta)$-diamond norm,
and a judgment $(\hat{\rho}, \delta)\vdash\widetilde{P}_\noise\leq\epsilon$
to reason about the error of quantum circuits.
Gleipnir  uses  Matrix Product State (MPS) tensor networks to approximate the quantum state
and compute the predicate $(\hat{\rho}, \delta)$.

Figure~\ref{fig:workflow} illustrates \framework{}'s workflow
for reasoning about the error bound of some quantum program $P$
with input state $\rho_0$ and noise model $\noise$ of quantum gates on the target device:

\begin{enumerate}
  \setlength\itemsep{0.75em}
\item[(1)] \framework{} first approximates the quantum state $\hat{\rho}$
and \fix{a sound overapproximation of} its distance $\delta$ to the ideal state $\rho$
using MPS tensor networks $TN(\rho_0, P) = (\hat{\rho}, \delta)$ %
(see Section~\ref{sec:tensor}). %

\item[(2)]  \framework{} then uses the constrained 
$(\hat{\rho}, \delta)$-diamond norm metric
to bound errors of noisy quantum gates
given a noise model $\noise$ of the target device.
\framework{} converts the problem of efficiently computing
the $(\hat{\rho}, \delta)$-diamond norm
to solving a polynomial-size SDP problem, %
given %
$(\hat{\rho}, \delta)$ computed
in  Step (1)  (see Section~\ref{sec:constrainedSDP}).

\item[(3)] \framework{} employs a lightweight quantum error logic to
compute the error bound of $\widetilde{P}_\noise$ using the 
 predicate $(\hat{\rho}, \delta)$ computed  in Step (1)
and the error bounds for all used quantum gates generated
by the SDP solver in Step (2)
(see Section~\ref{sec:logic}).
\end{enumerate}

Throughout this paper, we will return to the GHZ state circuit (\cref{ex:ghz})
as our running example.
This example uses 
the program $H(q_0); CNOT(q_0, q_1)$,
the input state $\ket{00} \bra{00}$,
and the noise model $\noise$,
describing the noisy gates $\widetilde{H}_\noise$ and $\widetilde{CNOT}_\noise$. 
Following the steps described above, we will use \framework{} to obtain
the final judgment of:
\[
    (\ket{00}\bra{00},\ 0)
    \vdash \Big(\widetilde{H}_\noise(q_0)\ ;\ \widetilde{CNOT}_\noise(q_0, q_1) \Big)
    \leq \epsilon,
\]
where $\epsilon$ is the total error bound of the noisy program.

\section{Quantum Error Logic}
\label{sec:logic}

We first introduce our lightweight logic
for reasoning about the error bounds of  quantum programs. %
In this section, we treat MPS tensor networks and
the algorithm to compute the $(\hat{\rho}, \delta)$-diamond norm as black boxes,
deferring their discussion to 
Sections~\ref{sec:tensor} and \ref{sec:constrainedSDP}, respectively.

The $(\hat{\rho},\delta)$-diamond norm is
defined as follows:
\begin{align*}
    \|\mathcal{U}- \mathcal{E}\|_{(\hat{\rho}, \delta)}
             &:=  \max_{\substack{\rho :\ \text{tr}(\rho)\ =\ 1,\\ \|\rho - \hat{\rho}\|_1 \le \delta}}
             \frac{1}{2}{\Big\|\mathcal{U}\otimes \mathcal{I}(\rho) - \mathcal{E} \otimes\mathcal{I}(\rho)\Big\|_1}.
\end{align*}
\noindent{}%
That is a diamond norm with the additional constraint that
the ideal input density matrix of $\rho$ needs to be within distance $\delta$
of $\hat{\rho}$, i.e., $T( \rho, \hat{\rho}) \le \delta$.

We use the judgment $(\hat{\rho}, \delta)\vdash\widetilde{P}_\noise\leq\epsilon$ 
to convey that when running the noisy program
$\widetilde{P}_\noise$ on an input state 
whose trace distance is at most $\delta$ from $\hat{\rho}$,
the trace distance between the noisy and noiseless outputs of program $P$
is at most $\epsilon$ under the noise model $\noise$ of the underlying device.

Figure~\ref{fig:rules} presents the five inference rules for our quantum error logic.
The \textsc{Skip} rule states that an empty program does not produce any noise. 
The \textsc{Gate} rule states that we can bound the error of a gate step
by calculating the gate's $(\hat{\rho}, \delta)$-diamond norm under the noise model $\noise$.
The \textsc{Weaken} rule states that the same error bound holds when we strengthen the precondition with a smaller approximation
bound $\delta'$.
The \textsc{Seq} rule states that the errors of a sequence can be summed together
with the help of the tensor network approximator ${TN}$.
The \textsc{Meas} rule bounds the error in an \texttt{if} statement,
with $\delta$ probability that the result of measuring the noisy input
differs from measuring state $\rho$, causing the wrong branch to \fix{be} executed.
Otherwise, the probability that the correct branch is executed is $1 - \delta$.
\fix{Given that in both branches, the error is bounded by a uniform value $\epsilon$,}
we multiply this probability by the error incurred in the branch,
and add it to the probability of taking the incorrect branch,
to obtain the error incurred by executing a quantum conditional statement.
\fix{The precondition in each branch is defined as $\hat{\rho}_0 = M_0 \hat{\rho} M_0^\dag / \text{tr}( M_0 \hat{\rho} M_0^\dag)$ and $\hat{\rho}_1 = M_1 \hat{\rho} M_1^\dag / \text{tr}( M_1 \hat{\rho} M_1^\dag)$.}

Our error logic contains two external components:
(1) $TN(\rho, P) = (\hat\rho, \delta)$, the tensor network approximator
used to approximate $[\![P]\!](\rho)$, obtaining $\hat\rho$ and an approximation error bound $\delta$; and (2)
$\| \cdot\|_{(\hat\rho, \delta)}$, the $(\hat{\rho}, \delta)$-diamond norm that characterizes the error bound generated by a single gate
under the noise model $\noise$.
The algorithms used to compute these components are explained in Sections~\ref{sec:tensor} and \ref{sec:constrainedSDP},
while the soundness proof of our inference rules is given in Appendix~\ref{subsec:soundness}.

We demonstrate how these rules can be applied to %
the 2-qubit GHZ state circuit from \cref{ex:ghz}
as follows.
The program is $\widetilde{H}_\noise(q_0); \widetilde{CNOT}_\noise(q_0, q_1)$
and the input state in the density matrix form 
is $\rho=\ket{00}\bra{00}$. 
We first compute the constrained diamond norm
$\epsilon_1 = \|\widetilde{\mathcal{H}}_\noise - \mathcal{H} \|_{(\rho, 0)}$
and apply the \textsc{Gate} rule to obtain:
\[
    (\rho, 0) \vdash \widetilde{H}_\noise(q_0) \le \epsilon_1.
\]
Then, we use the tensor network approximator to compute ${TN}(\rho, H(q_0))$,
whose result is $(\hat{\rho}, \delta)$.
Using such a predicate, we compute the $(\hat\rho, \delta)$-diamond norm
$\epsilon_2 = \|\widetilde{\mathcal{CNOT}}_\noise - \mathcal{CNOT}\|_{(\hat\rho, \delta)}$.
Applying the \textsc{Gate} rule again, we obtain:
\[
    (\hat\rho, \delta)\vdash \widetilde{CNOT}_\noise(q_0, q_1) \le \epsilon_2.
\]
Finally, we apply the \textsc{Seq} rule:
\[
    (\rho, 0) \vdash \Big(\widetilde{H}_\noise(q_0); \widetilde{CNOT}_\noise(q_0, q_1) \Big) \le \epsilon_1 + \epsilon_2,
\]
which gives the error bound of the noisy program, $\epsilon_1 + \epsilon_2$.

\section{Quantum State Approximation}
\label{sec:tensor}

\framework{} %
uses tensor networks to adaptively compute
the \fix{constraints} of the input quantum state
using an approximate state $\hat{\rho}$
and its distance $\delta$ from the ideal state $\rho$.
We provide the background on tensor networks in Section~\ref{sec:tensorsub1},
present how we use tensor networks to approximate quantum states
in Section~\ref{sec:tensorsub2},
and give  examples in Section~\ref{sec:tensorsub3}.

\subsection{Tensor network}
\label{sec:tensorsub1}

\para{Tensors}
Tensors describe the multilinear relationship between sets of objects in vector spaces,
and can be represented by multi-dimensional arrays.
The \emph{rank} of a tensor indicates the dimensionality of its array representation:
vectors have rank $1$, matrices rank $2$, and
superoperators  rank $4$
(since they operate over rank $2$ density matrices).

\para{Contraction}
Tensor contraction generalizes vector inner products and matrix multiplication.
A contraction between two tensors specifies
an index for each tensor, sums over these indices, %
and produces a new tensor.
The two indices used to perform contraction must have the same range to be contracted together.
The contraction of two tensors with ranks $a$ and $b$
will have rank $a+b-2$;
for example, if we contract the first index  in  $3$-tensor $A$
and the second index  in  $2$-tensor $B$,
the output will be a 3-tensor:
$$(A \times_{1,2} B)[jkl]= \sum_{t} A[tjk] B[lt].$$

\para{Tensor product}
The tensor product is calculated like an outer product;
if two tensors have ranks $a$ and $b$ respectively, 
their tensor product 
is a rank $a+b$ tensor.
For example, the tensor product of 2-tensor $A$ and 2-tensor $B$ is a 4-tensor: 
$$(A \otimes B)[ijkl] = A[ij] B[kl].$$

\para{Tensor networks}
Tensor network (TN) representation is a graphical calculus for reasoning about tensors,
with an intuitive representation of various quantum objects.
Introduced in the 1970s by Penrose~\cite{penrose1971applications},
this notation is used in
quantum information theory~\cite{Vidal2008,Vidal2003,Evenbly2009,Verstraete2008,wood2011tensor,Schollw_ck_2005},
as well as in other  fields such as machine learning~\cite{Efthymiou2019TensorNetworkFM,NIPS2016_6211}.

\begin{figure}[t]
    \begin{subfigure}[b]{.15\textwidth}
    \centering
    \captionsetup{justification=centering}
    \begin{tikzpicture}[baseline=-0.55]
     \draw (-0.25,0) -- (0.455,0);
      \node[draw, fill=blue!30,isosceles triangle,isosceles triangle stretches,minimum height=0.05cm,  minimum width =0.05cm,inner sep=0.12pt] at (0.455,0)  {$\psi$};
    \end{tikzpicture}
    \caption{Vector \\ (rank 1)}
    \end{subfigure}
    \begin{subfigure}[b]{.15\textwidth}
    \centering
        \captionsetup{justification=centering}
     \begin{tikzpicture}[baseline=-0.55]
      \draw (-0.5,  0) -- ( 0.5, 0);
      \node [fill=blue!30] at (0, 0) [rectangle, draw]  (k) {$X$};
         \end{tikzpicture}
    \caption{Matrix \\ (rank 2)}
    \end{subfigure}
   \begin{subfigure}[b]{.15\textwidth}
    \centering
        \captionsetup{justification=centering}
     \begin{tikzpicture}[baseline=-0.55]
        \node [draw, fill=blue!30]  at(0,0) {$\mathcal{E}$};
        \begin{pgfonlayer}{bg}
           \draw (-0.5, 0.15) -- (0.5, 0.15);
           \draw (-0.5, -0.12) -- (0.5, -0.12);
        \draw (-0.5, -0.12) arc (90:270:0.1);
        \draw (0.5, -0.32) arc (-90:90:0.1);
        \end{pgfonlayer}
    \end{tikzpicture} 
    \caption{Superoperator \\ (rank 4)}
    \end{subfigure}
    \caption{{
    Tensor network representation of various tensors.
    }}
    \label{fig:simpletn}
\end{figure}
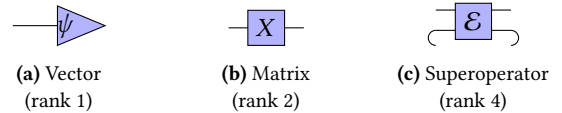

As depicted in Figure~\ref{fig:simpletn}, tensor networks consist of nodes and
edges\footnote{
Note that the shape of the nodes does not have any mathematical meaning;
they are merely used %
to distinguish different types of tensors.
}.
Each node represents a tensor,
and each edge out of the node represents an index of the tensor. 
As illustrated in Figure~\ref{fig:contraction}, %
the resulting network will itself constitute a whole tensor,
with each open-ended edge representing one index for the final tensor.
The graphical representation of a quantum program can be directly interpreted as a tensor network.
For example, the 2-qubit GHZ state circuit in Figure~\ref{fig:ghz}
can be represented by a tensor network in Figure~\ref{fig:ghz_tn}.

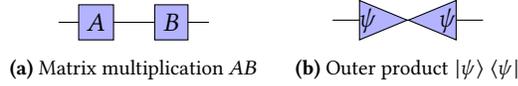
\begin{figure}[t]
    \begin{subfigure}[b]{.2\textwidth}
        \centering
        \begin{tikzpicture}[baseline=-0.55]
      \draw (-1.5,  0) -- ( 0.5, 0);
    \node [fill=blue!30] at (-1, 0) [rectangle, draw]  (a) {$A$};
     \node [fill=blue!30] at (0, 0) [rectangle, draw]  (b) {$B$};
         \end{tikzpicture}
         \caption{Matrix multiplication $AB$}
    \end{subfigure}
    \begin{subfigure}[b]{.2\textwidth}
        \centering
        \begin{tikzpicture}[baseline=-0.55]
      \draw (-1.,  0) -- ( 1., 0);
    
      \node[draw, fill=blue!30,isosceles triangle,isosceles triangle stretches,minimum height=0.05cm,  minimum width =0.05cm,inner sep=0.12pt] at (-0.53, 0)  {$\psi$};
      \node[draw, fill=blue!30,isosceles triangle,isosceles triangle stretches,minimum height=0.05cm,  minimum width =0.05cm,inner sep=0.12pt,shape border rotate=180] at (0.53, 0)  {$\psi$};
         \end{tikzpicture}
         \caption{Outer product $\ket{\psi}\bra{\psi}$}
    \end{subfigure}
    \caption{Tensor network representation for two matrix operations.
    In general, tensor contractions are represented by linking edges, and tensor products by juxtaposition.
    }
    \label{fig:contraction}
\end{figure}

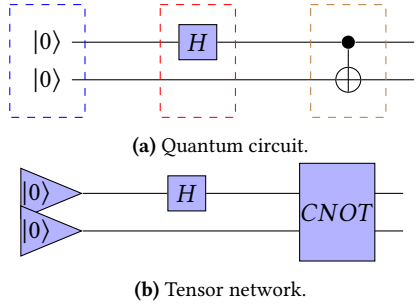
\begin{figure}[t]
    \begin{subfigure}[b]{.4\textwidth}
    \centering
    \begin{tikzpicture}
    \tikzstyle{operator} = [draw,fill=blue!30] 
    \tikzstyle{phase} = [fill,shape=circle,minimum size=5pt,inner sep=0pt]
    \tikzstyle{surround} = [fill=blue!10,thick,draw=black,rounded corners=2mm]
    \tikzset{XOR/.style={draw,circle,append after command={
        [shorten >=\pgflinewidth, shorten <=\pgflinewidth,]
        (\tikzlastnode.north) edge (\tikzlastnode.south)
        (\tikzlastnode.east) edge (\tikzlastnode.west)
        }
    }
}
    \node at (-1,0) (q1) {$\ket{0}$};
    \node at (-1,-0.5) (q2) {$\ket{0}$};

    \node[operator] (op11) at (1,0) {$H$} edge [-] (q1);

    \node[phase] (phase11) at (3,0) {} edge [-] (op11);
    \node[XOR] (phase12) at (3,-0.5) {} edge [-] (q2);
    \draw[-] (phase11) -- (phase12);

    \node (end1) at (4,0) {} edge [-] (phase11);
    \node (end2) at (4,-0.5) {} edge [-] (phase12);
    
    \draw [dashed, blue] (-1.5, 0.5) -- (-0.5, 0.5) -- (-0.5, -1) -- (-1.5,-1) -- (-1.5, 0.5);
    \draw [dashed, red] (0.5, 0.5) -- (1.5, 0.5) -- (1.5, -1) -- (0.5,-1) -- (0.5, 0.5);
    \draw [dashed, brown] (2.5, 0.5) -- (3.5, 0.5) -- (3.5, -1) -- (2.5,-1) -- (2.5, 0.5);
    \end{tikzpicture}
    \caption{Quantum circuit.}
    \end{subfigure}
    \begin{subfigure}[b]{.4\textwidth}
        \centering
    \begin{tikzpicture}
    \tikzstyle{operator} = [draw,fill=blue!30] 
    \tikzstyle{phase} = [fill,shape=circle,minimum size=5pt,inner sep=0pt]
    \tikzstyle{surround} = [fill=blue!10,thick,draw=black,rounded corners=2mm]
    \tikzset{XOR/.style={draw,circle,append after command={
        [shorten >=\pgflinewidth, shorten <=\pgflinewidth,]
        (\tikzlastnode.north) edge (\tikzlastnode.south)
        (\tikzlastnode.east) edge (\tikzlastnode.west)
        }
    }
}
    \node[draw, fill=blue!30,isosceles triangle,isosceles triangle stretches,minimum height=0.05cm,  minimum width =0.05cm,inner sep=0.12pt] at (-1,0) (q1) {$\ket{0}$};
    \node[draw, fill=blue!30,isosceles triangle,isosceles triangle stretches,minimum height=0.05cm,  minimum width =0.05cm,inner sep=0.12pt] at (-1,-0.5) (q2) {$\ket{0}$};

    \node[operator] (op11) at (1,0) {$H$} edge [-] (q1);

    \node[phase] (phase11) at (3,0) {} edge [-] (op11);
    \node[XOR] (phase12) at (3,-0.5) {} edge [-] (q2);
    \draw[-] (phase11) -- (phase12);

    \node (end1) at (4,0) {} edge [-] (phase11);
    \node (end2) at (4,-0.5) {} edge [-] (phase12);
    
    \draw [fill=blue!30]  (2.5, 0.4) rectangle  (3.5, -0.9) node[pos=.5] (d) {$CNOT$};
    
    \end{tikzpicture}
    \caption{Tensor network.}
    \end{subfigure}
    \caption{{
        The GHZ state, represented as a quantum circuit (a) and a tensor network (b).
        When we evaluate the output of the circuit, we can see that
        the input state $\ket{00}$ (enclosed in the dashed blue box),
        the $H$ gate $H\otimes I$ (enclosed in the dashed red box),
        and the $CNOT$ gate (enclosed in the dashed brown box). When evaluating the tensor network in (b), the 
         output  is the same as the program output,
        $\big(\ket{00} + \ket{11}\big)/\sqrt{2}$.
    }}
    \label{fig:ghz_tn}
\end{figure}

\tabulinesep=0.9mm
\begin{table*}[t]
\centering
\setlength{\tabcolsep}{2pt}
\addtolength{\leftskip} {-3cm} %
    \addtolength{\rightskip}{-3cm}
\begin{tabu}{c|c|c|c}
 & \textsc{Gate} & \textsc{Superoperator} & \textsc{Singular Value}  \\
  & \textsc{Contraction} & \textsc{Application} & \textsc{Decomposition} \\
  \hline
Tensor network &
\begin{tikzpicture}[baseline=-4]
    \draw (-0.5,0) -- (-0.25,0);
    \node[draw, fill=blue!30] at (0,0) {$U$};
    \draw (0.25,0) -- (0.6,0);
   \node[draw, fill=blue!30,isosceles triangle,isosceles triangle stretches,minimum height=0.05cm,  minimum width =0.05cm,inner sep=0.12pt] at (0.7,0)   {$\psi$};
   \draw (1.2,0) -- (1.4, 0);
 \node at (1.73,-0.01) {\large $\rightarrow$};
     \draw (2.05,0) -- (2.305,0);
          \node[draw, fill=blue!30,isosceles triangle,isosceles triangle stretches,minimum height=0.05cm,  minimum width =0.05cm,inner sep=0.12pt]at (2.305,0)  {$\phi$};
    \draw (2.8,0) --(3.0,0);
    \end{tikzpicture}
    &
    \begin{tikzpicture}[baseline=-10.55]
     \node [draw, fill=blue!30]  at(0,0) {$\mathcal{E}$};
      
    \node [draw, fill=blue!30] at (0.,-0.55)  {$\rho$};
    \node at (0.8,-0.2) {\large $\rightarrow$};
    \node [draw, fill=blue!30] at (1.555,-0.2)  {$\hat{\rho}$};
         \begin{pgfonlayer}{bg}
          \draw (-0.5, 0.15) -- (0.5, 0.15);
           \draw (-0.3, -0.12) -- (0.3, -0.12);
        \draw (-0.3, -0.12) arc (90:270:0.2);
        \draw (0.3, -0.52) arc (-90:90:0.2);
         \draw(-0.3, -0.52) -- (0.3, -0.52);
         \draw(1.15, -0.2) -- (1.95, -0.2);
         \end{pgfonlayer}
    \end{tikzpicture}
    &
        \begin{tikzpicture}[baseline=-4]
      \draw (-0.4, 0) -- ( 0.4, 0);
      \node[draw, fill=blue!30] at (0, 0) (k) {$M$};
     \node at (.8,-0.01) {\large $\rightarrow$}; 
      \draw (1.2, 0) -- (3.2, 0);

     \node[draw, fill=blue!30] at (1.6, 0) (k) {$U$};
     \node [fill=red!30, draw, aspect=.6, inner sep=0.5pt, diamond] at (2.2,0)  (m) {$\sigma$};
     \node[draw, fill=blue!30, text=blue!30] at (2.8, 0) (k) {$V$};     \node at (2.8, 0) {$V^\dagger$};

         \end{tikzpicture}
    \\
Dirac notation &  $U\ket{\psi} = \ket{\phi}$ & $\mathcal{E}(\rho) = \hat{\rho} $ &  $M=\sum_j \sigma_j U\ket{j}\bra{j} V^\dagger$ 
\\
Rank  & 1 & 2 & 2
\end{tabu}
\vspace*{0.2cm}
\caption{Examples of tensor network transformations for basic quantum operations.}
\label{tab:tensorops}
\end{table*}

\para{Transforming tensor networks}
To speed up the evaluation of a large tensor network,
we can apply reduction rules
to transform and simplify  the network structure.
In \cref{tab:tensorops}, we summarize some common reduction rules we use.
The \textsc{Gate Contraction} rule transforms a vector  $\psi$ and a matrix $U$ connected to it into a new vector $\phi$ that is the product of $U$ and $\psi$. The \textsc{Superoperator Application} rule transforms a superoperator 
$\mathcal{E}$ and a matrix $\rho$ connected to it into a matrix
$\hat{\rho}$ that represents the application of the superoperator $\mathcal{E}$ to $\rho$. The \textsc{Singular Value Decomposition (SVD)} rule transforms a matrix $M$ into the product of three matrices: $U$, $\Sigma$, and $V^\dag$, \fix{where $\Sigma$ is a diagonal matrix whose diagonal entries are the singular values $\sigma_1, \ldots, \sigma_n$ of $M$. This special matrix $\Sigma = \sum_{j}\sigma_j \ket{j}\bra{j}$ is graphically represented by a diamond.} 
By dropping small singular values in the diagonal matrix $\Sigma$,
we can obtain a simpler tensor network which closely approximates the original one.

\subsection{Approximate quantum states}
\label{sec:tensorsub2}
In this section, we describe our  tensor network approximator algorithm computing ${TN}(\rho, P) = (\hat\rho, \delta)$, such that the trace distance between our approximation $\hat{\rho}$ and the perfect output $[\![P]\!](\rho)$ satisfies ${T}\big(\hat\rho, [\![P]\!](\rho)\big) \le \delta$.
At each stage of the algorithm, we use Matrix Product State (MPS) tensor networks~\cite{mps}
to approximate quantum states.
This class of tensor networks uses $2n$ matrices to represent $2^n$-length vectors, greatly reducing the computational cost.
MPS tensor networks take a size $w$ as an argument,
which determines the space of representable states.
When $w$ is not large enough to represent all possible quantum states,
the MPS  is an \emph{approximate}  quantum state
whose approximation bound depends on $w$.
The  MPS representation with size $w$ of a quantum state $\psi$ (represented as a vector)  is:
\begin{align*}
\ket{\psi}_{\text{MPS}} :=
\sum_{i_1, \ldots, i_n} A_1^{(i_1)} A_2^{(i_2)} \cdots A_n^{(i_n)}\ket{i_1 i_2 \cdots i_n},
\end{align*}
\noindent{}where $A_1^{(i_1)}$ is a row vector of dimension $w$,
$A_2^{(i_2)}, \ldots, A_{n-1}^{(i_{n-1})}$ are $w \times w$ matrices,
and $A_n^{(i_n)}$ is a column vector of dimension $w$.
We use $i_j$ to represent the value of a  basis
$\ket{i_1 i_2 \cdots i_n}$ at position $j$, which 
can be $0$ or $1$.
For example, to represent the 3-qubit state $\big(\ket{000} + \ket{010} + \ket{001}\big)/3$ in MPS, 
we must find matrices $A_1^{(0)}, A_1^{(1)}, A_2^{(0)}, A_2^{(1)}, A_3^{(0)}, A_3^{(1)}$ such that
$$A_1^{(0)} A_2^{(0)} A_3^{(0)} = A_1^{(0)} A_2^{(1)} A_3^{(0)} =  A_1^{(0)} A_2^{(0)} A_3^{(1)} = \frac{1}{3},$$
while $ A_1^{(i_1)} A_2^{(i_2)} A_3^{(i_3)} = 0$ for all $(i_1, i_2, i_3) \ne (0, 0, 0)$, $(0, 1, 0)$, or $(0, 0, 1)$.

$A_i^{(0)}$ and $A_i^{(1)}$ can be taken together as a 3-tensor $A_i$ (\fix{$A_1$} and $A_n$ are 2-tensors)
where the superscript is taken as the third index besides the two indices of the matrix.
Overall, the MPS representation can be seen as a tensor network, as shown in Figure~\ref{fig:mps}. $A_1, \ldots, A_n$ are linked together in a line, while
$i_1, \ldots, i_n$ are open wires. %

\begin{figure}[t]
    \centering
     \vspace{-5pt}
   \begin{tikzpicture}
     \draw [fill=blue!30] (  0,0) circle [radius=0.24] node (a) {$A_1$};
     \draw [fill=blue!30] (1.1,0) circle [radius=0.24] node (b) {$A_2$};
     \draw [fill=blue!30] (2.2,0) circle [radius=0.24] node (c) {$A_3$};
     \draw [fill=blue!30] (3.3,0) circle [radius=0.24] node (e) {$A_4$};
     \draw [fill=blue!30] (4.4,0) circle [radius=0.24] node (f) {$A_5$};
     \draw [fill=blue!30] (5.5,0) circle [radius=0.24] node (g) {$A_6$};
    
    \node at (0, -0.7) {$i_1$};
     \node at (1.1, -0.7) {$i_2$};
      \node at (2.2, -0.7) {$i_3$};
       \node at (3.3, -0.7) {$i_4$};
        \node at (4.4, -0.7) {$i_5$};
         \node at (5.5, -0.7) {$i_6$};
   \begin{pgfonlayer}{bg}    %
    \draw (0,0) -- (5.5,0);
    \draw (b) -- (1.1, -0.5);
    \draw (a) -- (0, -0.5);
    \draw (c) -- (2.2, -0.5);
    \draw (f) -- (4.4, -0.5);
    \draw (e) -- (3.3, -0.5);
    \draw (g) -- (5.5, -0.5);
    \end{pgfonlayer}
   \end{tikzpicture}
    
    \caption{The MPS representation of six qubits.}
    \label{fig:mps}
\end{figure}
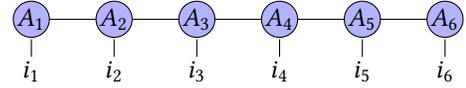

Our approximation algorithm starts by initializing the MPS to the input state in vector form.
Then, for each gate of the quantum program, we apply it to the MPS to get the intermediate state at this step
and compute the distance between MPS and the ideal state.
Since MPS only needs to maintain $2n$ tensors, i.e.,
$A_1^{(0)}$, $A_1^{(1)}$, $A_2^{(0)}$, $A_2^{(1)}$, $\cdots$,
$A_n^{(0)}$, $A_n^{(1)}$,
this procedure can be performed efficiently with
polynomial \fix{running time}.
After applying all quantum gates,
we obtain an MPS that approximates the output state of the quantum program,
as well as an approximation bound by summing together all accumulated approximation errors
incurred by the approximation process.
Our approximation algorithm consists of the following stages.

\para{Initialization}
Let $\ket{s_1 s_2 \cdots s_n}$ be the input state for an $n$-qubit quantum circuit.
For all $k \in [1, n]$, we initialize $A_k^{(s_k)} = E$ and $A_k^{(1-s_k)} = 0$, where $E$ is the matrix that 
$E_{1,1} = 1$ and $E_{i,j} = 0$ \fix{for all $i \ne 1$ or $ j \ne 1$}.

\para{Applying 1-qubit gates}
Applying a 1-qubit gate on an MPS always results in an MPS
and thus does not incur any approximation error.
For a single-qubit gate $G$ on qubit $i$, we update the tensor $A_i$
to $A'_i$ as follows:
\[
    {A'}_i^{(s)} = \sum_{s'\in \{0, 1\}} G_{s s'} A_i^{(s')}
 \qquad \text{for } s = 0 \text{ or } 1.
 \]
\noindent{}In the tensor network representation, such application amounts to contracting the tensor for the gate
with $A_i$ (see Figure~\ref{fig:one_qubit}).

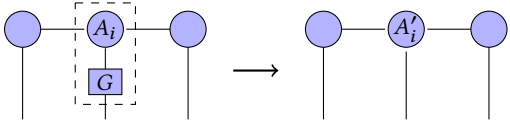
\begin{figure}[t]
    \centering
   \begin{tikzpicture}
     \draw [fill=blue!30] (0,0) circle [radius=0.24] node (a) {};
     \draw [fill=blue!30] (1.1,0) circle [radius=0.24] node (b) {\small{$A_i$}};
     \draw [fill=blue!30] (2.2,0) circle [radius=0.24] node (c) {};
     \draw [fill=blue!30] (1.1-0.22,-0.7-0.17) rectangle  (1.1+0.22,-0.7+0.17) node[pos=.5] (d) {\small{$G$}};
    
     \draw [fill=blue!30] (4,0) circle [radius=0.24] node (e) {};
     \draw [fill=blue!30] (5.1,0) circle [radius=0.24] node (f) {\small{$A'_i$}};
     \draw [fill=blue!30] (6.2,0) circle [radius=0.24] node (g) {};
    
    \draw [->, thick] (2.8, -0.55) -- (3.4, -0.55);
    
    \draw [dashed] (0.7, 0.35) -- (1.5, .35) -- (1.5, -1.) --(.7, -1.) --cycle;
   \begin{pgfonlayer}{bg}    %
    \draw (a.east) -- (b.west);
    \draw (b) -- (c);
    \draw (b) -- (1.1, -1.2);
    \draw (a) -- (0, -1.2);
    \draw (c) -- (2.2, -1.2);
    
    \draw (e.east) -- (f.west);
    \draw (f) -- (g);
    \draw (f) -- (5.1, -1.2);
    \draw (e) -- (4, -1.2);
    \draw (g) -- (6.2, -1.2);
    \end{pgfonlayer}
   \end{tikzpicture}

    \caption{Applying a 1-qubit gate to an MPS.
    We contract the MPS node for the qubit and the gate (in the dashed box),
    resulting in another 3-tensor MPS node.}
    \label{fig:one_qubit}
\end{figure}

\begin{figure*}[t]
    \centering
   \begin{tikzpicture}
     \draw [fill=blue!30] (0,0) circle [radius=0.25] node (a) {};
     \draw [fill=blue!30] (0.9,0) circle [radius=0.25] node (b) {\scriptsize{$A_i$}};
     \draw [fill=blue!30] (1.8,0) circle [radius=0.25] node (c) {\scriptsize{$A_{i+1}$}};
     \draw [fill=blue!30] (2.7,0) circle [radius=0.25] node (h) {};
     \draw [fill=blue!30] (0.9-0.2,-0.7-0.20) rectangle  (1.8+0.2,-0.7+0.15) node[pos=.5] (d) {\scriptsize$G$};
    
     \draw [fill=blue!30] (4,0) circle [radius=0.25] node (e) {};
     \draw [fill=blue!30] (5.1-0.25, -0.20) rectangle (5.1+.25, 0.20)  node[pos=0.5] (f) {\scriptsize$M'$};
     \draw [fill=blue!30] (6.2,0) circle [radius=0.25] node (g) {};

     \draw [fill=blue!30] (9.3-1.35,0) circle [radius=0.25] node (i) {};
     \draw [fill=blue!30] (9.3-0.65,0) circle [radius=0.25] node (j) {\scriptsize$U$};
     \draw [fill=blue!30] (9.3+.65,0) circle [radius=0.25] node (k) {\scriptsize$V^\dag$};
     \draw [fill=blue!30] (9.3+1.35,0) circle [radius=0.25] node (l) {};
     \node [fill=red!30, draw, aspect=.6, inner sep=2pt, diamond] at (9.3,0)  (m) {};

     \draw [fill=blue!30] (12.3+0,0) circle [radius=0.25] node (n) {};
     \draw [fill=blue!30] (12.3+0.9,0) circle [radius=0.25] node (o) {\scriptsize{$A'_i$}};
     \draw [fill=blue!30] (12.3+1.8,0) circle [radius=0.25] node (p) {\scriptsize{$A'_{i+1}$}};
     \draw [fill=blue!30] (12.3+2.7,0) circle [radius=0.25] node (q) {};

    \draw [->, thick] (3.0, -0.55) -- (3.6, -0.55) node[midway, above] {(i)};
    \draw [->, thick] (6.8, -0.55) -- (7.4, -0.55) node[midway, above] {(ii)};
      \draw [->, thick] (11.2, -0.55) -- (11.8, -0.55) node[midway, above] {(iii)};
    \draw [dashed] (0.55, 0.35) -- (2.15, .35) -- (2.15, -1.02) --(.55, -1.02) --cycle;
    \draw [dashed] (5.1+0.35, 0.35) -- (5.1-0.35, .35)  --(5.1-0.35, -0.35) --(5.1+0.35, -0.35) --cycle;
    \draw [dashed] (9.3+1.0, 0-0.4) -- (9.3-1.0, 0-0.4)  --(9.3-1.0, 0+.4) --(9.3+1.0, 0+.4) --cycle;
    
   \begin{pgfonlayer}{bg}    %
    \draw (a) -- (h);
    \draw (b) -- (.9, -1.2);
    \draw (a) -- (0, -1.2);
    \draw (c) -- (1.8, -1.2);
     \draw (h) -- (2.7, -1.2);
    
    \draw (e.east) --  (g);
    \draw (5.1-0.09, 0) -- (5.1-0.09, -1.2);
    \draw (5.1+0.09, 0) -- (5.1+0.09, -1.2);
    \draw (e) -- (4, -1.2);
    \draw (g) -- (6.2, -1.2);
    
    \draw (i) -- (j);
    \draw (k) -- (l);
    \draw (9.3-0.65, -0.04) -- (9.3+0.65, -0.04);
    \draw (9.3-0.65, 0.04) -- (9.3+0.65, 0.04);
    \draw (i) -- (9.3-1.35, -1.2);
    \draw (j) -- (9.3-0.65, -1.2);
    \draw (k) -- (9.3+.65, -1.2);
     \draw (l) -- (9.3+1.35, -1.2);

    \draw (n) -- (q);
    \draw (n) -- (12.3+0, -1.2);
    \draw (o) -- (12.3+0.9, -1.2);
    \draw (p) -- (12.3+1.8, -1.2);
     \draw (q) -- (12.3+1.35+1.35, -1.2);
    
    \end{pgfonlayer}
   \end{tikzpicture}
    \caption{Applying a 2-qubit gate on two adjacent qubits to the MPS,
    \change{via (i) node contraction, (ii) singular value decomposition, and (iii) singular value truncation with re-normalization.}
    }
    \label{fig:two_qubit}
\end{figure*}
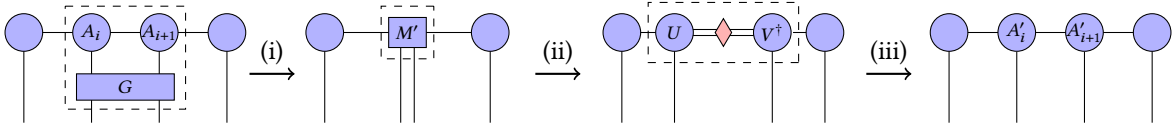

\para{Applying 2-qubit gates}
If we are applying a 2-qubit gate $G$ on two adjacent qubits $i$ and $i + 1$,
we only need to modify $A_i$ and $A_{i + 1}$.
We first contract $A_i$ and $A_{i + 1}$ to get \fix{a} $2w \times 2w$ matrix $M$:
\[
    \begin{bmatrix} A_i^{(0)} \\ A_i^{(1)} \end{bmatrix}
    \begin{bmatrix} A_{i+1}^{(0)} & A_{i+1}^{(1)} \end{bmatrix}
    =
    \begin{bmatrix}M_{00} & M_{01} \\ M_{10} & M_{11} \end{bmatrix}
    =
    M.
\]
\noindent{}Then, we apply the 2-qubit gate to it:
\[
    M_{ij}' = \sum_{k, l} G_{ijkl} M_{kl}.
\]
\noindent{}We then need to decompose this new matrix
$M'$ back into two tensors.
We first apply the 
\textsc{Singular Value Decomposition} rule on the contracted matrix:
\[
    M' = U \Sigma V^\dag.
\]
\noindent{}When $w$ is not big enough to represent all possible quantum states, $M'$ introduces approximation errors
and may not be a contraction of two tensors.
Thus, we \emph{truncate} the lower half of the singular values in $\Sigma$, enabling the tensor decomposition while reducing
the error: %
\[
    \Sigma \approx \begin{bmatrix} \Sigma' & 0 \\ 0 & 0\end{bmatrix}.
\]
\noindent{}Therefore, we arrive at a new MPS  whose
new tensors $A'_i$ and $A'_{i+1}$ are calculated as follows:
\[
    \begin{bmatrix} A_i^{(0)'} & *\\ A_i^{(1)'} & *  \end{bmatrix}  = U,
    \begin{bmatrix} A_{i+1}^{(0)'} & A_{i+1}^{(1)'} \\ * & * \end{bmatrix} = \Sigma' V,
\]
where $*$ denotes \fix{the part that we truncate}.
After truncation, we renormalize the state to a norm-$1$ vector.

Figure~\ref{fig:two_qubit} shows the  above procedure 
in tensor network form
by (1) first applying \textsc{Gate Contracting} rule for
$A_i$, $A_{i + 1}$ and $G$,
(2) using \textsc{Singular Value Decomposition} rule to decompose
the contracted tensor,
 (3) truncating the internal edge to width $w$,
 and finally (4) calculating the updated $A'_i$ and $A'_{i + 1}$. 
If we want to apply a 2-qubit gate to non-adjacent qubits,
we add swap gates to move the two qubits together before
applying the gate to the now adjacent pair of qubits.

\para{Bounding approximation errors}
When applying 2-qubit gates, we compute an MPS to approximate the gate application.
Each time we do so, we must bound the error due to this approximation.
Since the truncated values themselves comprise an MPS state,
we may determine the error by simply calculating the trace distance between
the states before and after truncation. %

The trace distance of two MPS states can be calculated from the inner product
of these two MPS:
$$\delta := \big|\ket{\phi}\bra{\phi} - \ket{\psi}\bra{\psi}\big\|_1 = 2 \sqrt{1 - |\braket{\phi|\psi}|^2}. $$
The inner product of two  states
$\ket{\psi}$ and $\ket{\phi}$ (represented using 
$A$ and $B$ in their MPS forms) is defined as follows:
\begin{align*}
    \braket{\psi|\phi} %
    =& \sum_{i_1, \ldots, i_n} \Big\langle A_1^{(i_1)} \cdots A_n^{(i_n)}, 
                                    B_1^{(i_1)} \cdots B_n^{(i_n)}
        \Big\rangle.
\end{align*}
\noindent{}Figure~\ref{fig:mps_inner_product} shows its tensor network graphical representation.

\begin{figure}[t]
    \centering
   \begin{tikzpicture}
     \draw [fill=blue!30] (0,0) circle [radius=0.24] node (a) {$A_1$};
     \draw [fill=blue!30] (1.1,0) circle [radius=0.24] node (b) {$A_2$};
     \draw [fill=blue!30] (2.2,0) circle [radius=0.24] node (c) {$A_3$};
     \draw [fill=blue!30] (3.3,0) circle [radius=0.24] node (e) {$A_4$};
     \draw [fill=blue!30] (4.4,0) circle [radius=0.24] node (f) {$A_5$};
     \draw [fill=blue!30] (5.5,0) circle [radius=0.24] node (g) {$A_6$};
     
     \draw [fill=blue!30] (0,-1) circle [radius=0.24] node (a2) {$B_1$};
     \draw [fill=blue!30] (1.1,-1) circle [radius=0.24] node (b2) {$B_2$};
     \draw [fill=blue!30] (2.2,-1) circle [radius=0.24] node (c2) {$B_3$};
     \draw [fill=blue!30] (3.3,-1) circle [radius=0.24] node (e2) {$B_4$};
     \draw [fill=blue!30] (4.4,-1) circle [radius=0.24] node (f2) {$B_5$};
     \draw [fill=blue!30] (5.5,-1) circle [radius=0.24] node (g2) {$B_6$};

   \begin{pgfonlayer}{bg}    %
    \draw (0,0) -- (5.5,0);
    \draw (0,-1) -- (5.5,-1);
    \draw (b) -- (1.1, -1.2);
    \draw (a) -- (0, -1.2);
    \draw (c) -- (2.2, -1.2);
    \draw (f) -- (4.4, -1.2);
    \draw (e) -- (3.3, -1.2);
    \draw (g) -- (5.5, -1.2);
    \end{pgfonlayer}
   \end{tikzpicture}
    \caption{
        Tensor network representation of the inner product of two MPSs.
        An open wire of one MPS is linked with an open wire of another,
        which denotes the summation over $i_1, \ldots, i_n$.
    }
    \label{fig:mps_inner_product}
\end{figure}
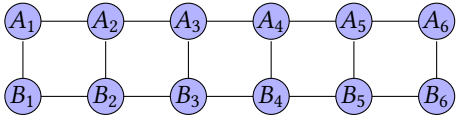

In our approximation algorithm,
we can iteratively calculate \fix{from qubit $1$ to qubit $n$ the distance} by first determining:
\[
    D_1 = A_1^{(0)} B_1^{(0)\dag} + A_1^{(1)} B_1^{(1)\dag}.
\]
Then, we repeatedly apply tensors to the rest of qubits:%
\[
    D_i = A_i^{(0)} D_{i-1} B_i^{(0)\dag} + A_i^{(1)} D_{i-1} B_i^{(1)\dag},
\]
leading us to the final result of $D_n = \braket{\psi|\phi}$.
In the  tensor network graphical representation,
this algorithm is a left-to-right contraction, as shown in Figure~\ref{fig:mps_contract}.

\begin{figure*}[t]
    \centering
   \begin{tikzpicture}
     \draw [fill=blue!30] (-1,0) circle [radius=0.24] node (a) {$A_1$};
     \draw [fill=blue!30] (0.1,0) circle [radius=0.24] node (b) {$A_2$};
     \draw [fill=blue!30] (1.2,0) circle [radius=0.24] node (c) {$A_3$};
     \draw [fill=blue!30] (4,0) circle [radius=0.24] node (e) {$D_1$};
     \draw [fill=blue!30] (5.1,0) circle [radius=0.24] node (f) {$A_2$};
     \draw [fill=blue!30] (6.2,0) circle [radius=0.24] node (g) {$A_3$};

     \draw [fill=blue!30] (9-0.5,0) circle [radius=0.24] node (xf) {$D_2$};
     \draw [fill=blue!30] (9+1.1-0.5,0) circle [radius=0.24] node (xg) {$A_3$};
     
    \draw [fill=blue!30] (9+1.1+1.8,0) circle [radius=0.24] node (xc) {$D_3$};
     
     \node at(3.2,-0.5) (x1) {\large $\rightarrow$};
     \node at(7.8,-0.5) (x3) {\large $\rightarrow$};
     \node at(14,-0.6) (x3) {\large $\xrightarrow[n-3\text{ steps}]{}$};
     \node at(11,-0.5) (x2) {\large $\rightarrow$};
     
     \node at(2,0) {$\cdots$};
     \node at(2,-1) {$\cdots$};
     \node at(7,0) {$\cdots$};
     \node at(7,-1) {$\cdots$};
     \node at(6.65+4.25-0.5,-0) {$\cdots$};
     \node at(6.65+4.25-0.5,-1) {$\cdots$};
     \node at(2.5+10.2,0) {$\cdots$};
     \node at(2.5+10.2,-1) {$\cdots$};
     
     \draw [fill=blue!30] (-1,-1) circle [radius=0.24] node (a2) {$B_1$};
     \draw [fill=blue!30] (0.1,-1) circle [radius=0.24] node (b2) {$B_2$};
     \draw [fill=blue!30] (1.2,-1) circle [radius=0.24] node (c2) {$B_3$};

     \draw [fill=blue!30] (5.1,-1) circle [radius=0.24] node (f2) {$B_2$};
     \draw [fill=blue!30] (6.2,-1) circle [radius=0.24] node (g2) {$B_3$};
     
     \draw [fill=blue!30] (9+1.1-0.5,-1) circle [radius=0.24] node (xg2) {$B_3$};
     
     \draw [fill=blue!30] (15.5,-0) circle [radius=0.24] node (xgn) {$D_n$};
     
   \begin{pgfonlayer}{bg}    %
    \draw (-1,0) -- (1.7,0);
    \draw (-1,-1) -- (1.7,-1);
    \draw (b) -- (0.1, -1.2);
    \draw (a) -- (-1, -1.2);
    \draw (c) -- (1.2, -1.2);
    \draw (f) -- (f2.center);
    \draw (e.center) -- (f2.center);
    \draw (e.center) -- (6.7, 0);
    \draw (f2.center) -- (6.7,-1);
    \draw (g.center) -- (g2.center);
    \draw (xf.center) -- (6.35+4.25-0.5,0);
    \draw (xf.center) -- (xg2.center);
    \draw (xg.center) -- (xg2.center);
    \draw (xg2.center) -- (6.35+4.25-0.5,-1);
    \draw (xc.center) -- (2.2+10.2, 0);
    \draw (xc.center) -- (2.2+10.2,-1);
    \end{pgfonlayer}
   \end{tikzpicture}
    \caption{Contraction of the inner product of two MPS. We first contract $A_1$ and $B_1$ to get $D_1$. Then contract $D_1$, $A_2$ and $B_2$ to $D_2$. And then $D_2, A_3$ and $B_3$ to $D_3$. Repeating this process will result in a single tensor node $D_n$, i.e., the final answer. %
    }
    \label{fig:mps_contract}
\end{figure*}
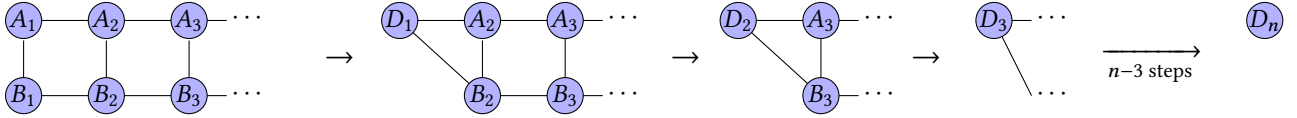

Given the calculated distance of each step,
we must combine them to obtain the overall approximation error.
For some arbitrary quantum program with $t$ 2-qubit gates,
let the truncation errors be $\delta_1, \delta_2, \ldots, \delta_t$
when applying the 2-qubit gates $g_1, g_2,...,g_t$,
the final approximation error is $\delta = \sum_{i=1}^t \delta_i$.

\fix{To see why}, we consider the approximation of one 2-qubit gate.
Let $\ket{\psi}$ denote some quantum state and \fix{$\ket{\hat{\psi}}$} its approximation
with bounded error $\delta_0$.
After applying a 2-qubit gate $G$ to the approximate MPS state,
we obtain the truncated result $\ket{{\phi}}$ with bounded error $\delta_1$.
We now have:
\begin{align}
    \|G \ket{\psi} - \ket{\phi} \| &\le \|G \ket{\psi} -
        G\ket{\hat{\psi}} \| \ +\ \|G\ket{\hat{\psi}} - \ket{\phi}\| \nonumber\\
    &= \| \ket{\psi} - \ket{\hat{\psi}} \|\ +\ \|G\ket{\hat{\psi}} - \ket{\phi}\|\nonumber \\
    &= \delta_0 + \delta_1, \label{eq:additivity}
\end{align}
where $\|\ket{\psi} - \ket{\phi}\| = \|\ket{\psi}\bra{\psi} - \ket{\phi}\bra{\phi}\|_1$.
The inequality holds because of the triangular inequality of quantum state distance
and the fact that $G$ is unitary, thus preserving the trace norm.
Repeating this for each step,
we know that the total approximation error is bounded by the sum of all approximation errors. 

\para{Supporting branches}
\fix{Due to the deferred measurement principle~\cite{NielsenChuang},
measurements can always be delayed to the end of the program.
Thus, in-program measurements are not required for quantum program error analysis. 
Our approach can also directly support \texttt{if} statements by calculating an MPS for each branch.
When we apply the measurement on the $i$-th qubit, we obtain the collapsed states
by simply setting $A_i^{(0)}$ or $A_i^{(1)}$ to the zero matrix,
obtaining MPS tensor networks corresponding to measurements of $0$ and $1$.
Using these states, we continue to evaluate the subsequent MPS in each branch separately.
We cannot merge the measured states once they have diverged, so we must duplicate any code
sequenced after the branch and compute the approximated state separately;
the number of intermediate MPS representations we must compute is thus the number of branches.
The overall approximation error is taken to be the sum of approximation errors incurred on all branches.
Note that the number of branches may be exponential to
the number of \texttt{if} statements.
}

\para{Complexity analysis}
The running time of all the operations above scales polynomially  with respect to
the MPS size $w$, number of qubits $n$, number of branches $b$, and number of gates $m$ in the program.
To be precise, applying a 1-qubit gate only requires one matrix addition 
with a $O(w^2)$ time complexity.
Applying a 2-qubit gate requires matrix multiplication and SVD
with a $O(w^3)$ time complexity.
Computing inner product of two MPS (e.g. for contraction)
requires $O(n)$ of matrix multiplications,
incurring an overall running time of $O(nw^3)$. 
Since the algorithm \fix{scans} all $m$ gates in the program,
the overall  time complexity is $O(bmnw^3)$.

Although a \emph{perfect} approximation (i.e., a full simulation)
requires an MPS size that scales exponentially with respect to the number of qubits \fix{(i.e., $2^n$ size when there are $n$ qubits)},
our approximation algorithm allows \framework{} to be configured with smaller MPS sizes,
sacrificing some precision in favor of efficiency
and enabling its practical use for real-world quantum programs.

\para{Correctness}
From the quantum program semantics defined in Figure~\ref{fig:semantics},
we know that we can compute the output state by applying all the program's gates 
in sequence.
Following Equation~\eqref{eq:additivity}, we know that the total error bound for our approximation algorithm
is bounded by the sum of each step's bound.
Thus, we can conclude that our algorithm correctly approximates
the output state and correctly bounds the approximation error in doing so.
\begin{theorem}
\label{thm:tn}
Let the output of our approximation algorithm be $(\hat\rho, \delta) = \mathrm{TN}(\rho, P)$.
The trace distance between the approximation and perfect output is bound by $\delta$:
\[
    \fix{\Big\|[\![P]\!](\hat\rho) - [\![P]\!](\rho)\Big\|_1} \le \delta.
\]
\end{theorem}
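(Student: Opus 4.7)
The plan is to prove \cref{thm:tn} by induction on the sequence of operations applied by the approximation algorithm, tracking how the cumulative trace distance between the running MPS approximation and the exact state evolves at each step. Let $g_1, g_2, \ldots, g_m$ denote the gates of $P$ in execution order, let $\rho_k$ denote the exact intermediate state after applying $g_1, \ldots, g_k$ to $\rho$, and let $\hat\rho_k$ denote the MPS approximation the algorithm maintains at that point, with recorded truncation errors $\delta_1, \ldots, \delta_m$. The invariant I would prove is $\|\hat\rho_k - \rho_k\|_1 \le \sum_{i \le k} \delta_i$. The base case $k=0$ is immediate: the initialization stage sets the MPS to represent the input exactly, so $\hat\rho_0 = \rho$ and no error is incurred.

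Next I would dispatch the easy inductive cases. When $g_{k+1}$ is a single-qubit gate, no SVD truncation occurs, so $\delta_{k+1}=0$ and $\hat\rho_{k+1} = U\hat\rho_k U^\dag$. By the unitary invariance of the trace norm, $\|U\hat\rho_k U^\dag - U\rho_k U^\dag\|_1 = \|\hat\rho_k - \rho_k\|_1$, so the bound is preserved. For \texttt{if} statements, I would invoke the deferred-measurement observation in the text and verify that the branch-wise MPS computations, recombined as the post-measurement mixture in \cref{fig:semantics}, aggregate correctly by the triangle inequality on trace norms, with the overall bound equal to the sum of per-branch errors.

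The main obstacle, and the heart of the proof, is the two-qubit gate case. Here the algorithm first computes the exact contraction of $A_i$, $A_{i+1}$ with the gate tensor (yielding $G\hat\rho_k G^\dag$ in the full space) and then performs an SVD and truncates small singular values, producing $\hat\rho_{k+1}$. I would need to show that the recorded $\delta_{k+1}$ really upper-bounds $\|G\hat\rho_k G^\dag - \hat\rho_{k+1}\|_1$. This relies on the pure-state identity
\[
\bigl\| \ket{\phi}\bra{\phi} - \ket{\psi}\bra{\psi} \bigr\|_1 = 2\sqrt{1 - |\braket{\phi|\psi}|^2},
\]
already used in the algorithm to compute $\delta_i$, together with the fact that the discarded singular values control the overlap between the pre- and post-truncation states. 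Once this is in hand, applying the triangle inequality together with unitary invariance gives
\[
\|\hat\rho_{k+1} - \rho_{k+1}\|_1 \le \|\hat\rho_{k+1} - G\hat\rho_k G^\dag\|_1 + \|G\hat\rho_k G^\dag - G\rho_k G^\dag\|_1 \le \delta_{k+1} + \|\hat\rho_k - \rho_k\|_1,
\]
which is precisely Equation~\eqref{eq:additivity}, closing the induction.

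Finally, evaluating the invariant at $k = m$ yields $\|\hat\rho_m - \rho_m\|_1 \le \sum_{i=1}^m \delta_i = \delta$, which matches the theorem (noting that $\rho_m = [\![P]\!](\rho)$ and, since $\hat\rho_m$ is what the algorithm returns as $\hat\rho$, applying a trivial \texttt{skip} at the end leaves both sides unchanged). I expect the branch handling to demand the most care, because the \texttt{if} rule projects onto subnormalized substates and the per-branch MPS evolve independently; verifying that the sum of per-branch trace-distance bounds indeed dominates the trace distance of the recombined mixture will require a short direct calculation using the definition of $[\![\cdot]\!]$ on conditionals in \cref{fig:semantics}.
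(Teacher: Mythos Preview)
Your proposal is correct and follows essentially the same approach as the paper. The paper's own argument is the short derivation culminating in Equation~\eqref{eq:additivity} together with the ``Correctness'' paragraph immediately preceding the theorem: a per-step triangle inequality plus unitary invariance of the trace norm, iterated over the gate sequence. Your write-up is a more careful and explicit version of that same induction (and your treatment of single-qubit gates and of the \texttt{if} branches is more thorough than what the paper actually spells out).
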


\subsection{Example: GHZ circuit}
\label{sec:tensorsub3}

We revisit the GHZ circuit in Figure~\ref{fig:ghz} to walk through
how we approximate quantum states with MPS tensor networks. 
The same technique can be applied to larger and more complex quantum circuits,
discussed in Section~\ref{sec:evaluation}.

\para{Approximation using 2-wide MPS}
Since the program only contains two qubits,
an MPS with size $w=2$ can already perfectly represent
all possible quantum states such that no approximation error
will be introduced.
Assume the input state is $\ket{00}$.
First, we initialize all the tensors based on the input state $\ket{00}$:
\[
    A_1^{(0)} = [1, 0],\ A_1^{(1)} = [0, 0],\  A_2^{(0)} = [1, 0]^T,\ 
      A_2^{(1)}  = [0, 0]^T.
\]

Then, we apply the first $H$ gate to qubit 1,
changing only $A_1^{(0)}$ and $A_1^{(1)}$:

\begin{center}
    \begin{tabu}{cc}
        $A_1^{(0)} = [1, 0] / \sqrt{2}$, &
        $A_1^{(1)} = [1, 0]/ \sqrt{2}$. 
    \end{tabu}  
\end{center}

To apply the CNOT gate on qubit $1$ and $2$,
we first compute matrix $M$ and $M'$:
\[
    M = \begin{bmatrix}1/\sqrt{2} & 0  \\ 1/\sqrt{2} & 0\end{bmatrix},\quad
    M' = \begin{bmatrix}1/\sqrt{2} & 0  \\ 0 &  1/\sqrt{2}\end{bmatrix}.
\]

We then decompose $M'$ using SVD,
\[
    U = V^\dag = \begin{bmatrix}1 & 0  \\0 & 1\end{bmatrix},\quad
    \Sigma = \begin{bmatrix}1/\sqrt{2} & 0  \\ 0 &  1/\sqrt{2}\end{bmatrix}.
\]

Because there are just 2 non-zero singular values, we do not need to drop singular values with 2-wide MPS networks and can compute the new MPS as follows: %

\begin{center}
    \begin{tabu}{cc}
        $A_1^{(0)} = [1, 0]$, &
            $A_1^{(1)} = [0, 1]$, \\
        $A_2^{(0)} = [1/\sqrt{2} , 0]^T$, &
            $A_2^{(1)}  = [0, 1/\sqrt{2}]^T$.
    \end{tabu}  
\end{center}

We can see that the output will be $\hat{\rho}=\frac{\ket{00} + \ket{11}}{\sqrt{2}}$ and $\delta = 0$,
since $A_1^{(0)} A_2^{(0)} = A_1^{(1)} A_2^{(1)} = 1/\sqrt{2}$
and other values of $i_0$ and $i_1$ result $0$.

\para{Approximation using 1-wide MPS}
To show how we calculate the approximation error, we use the simplest form of MPS with size $w = 1$, while each $A_i^{(j)}$  becomes a number.

We first initialize the MPS to represent $\ket{00}$:
$$A_1^{(0)} = 1, \ A_1^{(1)} = 0, \ A_2^{(0)} = 1, \ A_2^{(1)} = 0. $$
Then, we apply the $H$ gate to qubit 1:
$$A_1^{(0)} = 1/\sqrt{2}, \ A_1^{(1)} =  1/\sqrt{2}, \ A_2^{(0)} = 1, \ A_2^{(1)} = 0. $$
After that, we apply the CNOT gate and  compute $M$ and $M'$:
\[
    M = \begin{bmatrix}1/\sqrt{2} & 0  \\ 1/\sqrt{2} & 0\end{bmatrix},\quad
    M' = \begin{bmatrix}1/\sqrt{2} & 0  \\ 0 &  1/\sqrt{2}\end{bmatrix}.
\]
We decompose $M'$ using SVD:
\[
    U = V^\dag = \begin{bmatrix}1 & 0  \\0 & 1\end{bmatrix},\quad
    \Sigma = \begin{bmatrix}1/\sqrt{2} & 0  \\ 0 &  1/\sqrt{2}\end{bmatrix}.
\]
Since there are 2 non-zero singular values, we have to drop the lower half with 1-wide MPS tensor networks.
Finally, we obtain $A'_1$ and $A'_2$:
$$A_1^{(0)} = 1, \ A_1^{(1)} =  0, \ A_2^{(0)} = 1/\sqrt{2}, \ A_2^{(1)} = 0. $$
We renormalize the MPS:
$$A_1^{(0)} = 1, \ A_1^{(1)} =  0, \ A_2^{(0)} = 1, \ A_2^{(1)} = 0. $$
Thus, the output approximate state is $\ket{00}$.

To calculate the approximation error bound, we represent the part we drop as an MPS $B$:
$$B_1^{(0)} = 0, \ B_1^{(1)} =  1, \ B_2^{(0)} = 0, \ B_2^{(1)} = \sqrt{2}. $$
Let the unnormalized final state be $\ket{A}$
and the dropped state be $\ket{B}$. Then, the final output is $\sqrt{2}\ket{A}$ and the ideal output is $\ket{A} + \ket{B}$.
The trace distance between the state is
$$\delta = 2 \sqrt{1 - |\braket{\sqrt{2} A | A+B}^2 |} = \sqrt{2} .$$
Therefore, the final output will be $\hat\rho = \ket{00}\bra{00}$ and $\delta = \sqrt{2}$.

\section{Computing the $(\hat{\rho},\delta)$-Diamond Norm}
\label{sec:constrainedSDP}
 Section~\ref{sec:logic} introduces our quantum error logic using the $(\hat{\rho},\delta)$-diamond norm, while treating its 
computation algorithm as a black box.
In this section, we describe how to efficiently calculate the $(\hat{\rho},\delta)$-diamond norm given $(\hat{\rho}, \delta, \mathcal{U}, \mathcal{E})$, \fix{where $\mathcal{U}$ and $\mathcal{E}$ are the perfect and the noisy superoperators respectively.}

\para{Constrained diamond norm}
In $(\hat{\rho},\delta)$-diamond norm, the input state $\rho_{\text{in}}$ is constrained by
\[
    \|\hat\rho - \rho_{\text{in}}\|_1 \le \delta.
\]
We first compute the local density matrix (defined later in this section) $\rho'$ of $\hat\rho$. 
Then, to compute the  $(\hat{\rho},\delta)$-diamond norm, we extend the result of Watrous~\cite{watrous2013simpler} 
by adding the following constraint:
$$\text{tr} (\rho' \rho) \ge \| \rho' \|_F(\| \rho' \|_F - \delta),$$
where  $\rho$ denotes the local density matrix of $\rho_{\text{in}}$.
Thus, $(\hat{\rho},\delta)$-diamond norm can be computed by the following semi-definite program(SDP):

\begin{theorem}
The $(\hat\rho,\delta)$-diamond norm $||\Phi||_{(\hat{\rho}, \delta)}$
can be solved by the semi-definite program(SDP) in Equation~\eqref{eq:new_sdp}.%
\begin{equation}
\begin{array}{ll@{}ll}
\text{maximize}  & \mathrm{tr}(J(\Phi)W) \\
\text{subject to}&  I \otimes \rho \succcurlyeq W  \\
&{ \text{tr} (\rho' \rho) \ge \| \rho' \|_F(\| \rho' \|_F - \delta) }\\
& W \succcurlyeq 0,\ \rho \succcurlyeq 0,\ \mathrm{tr}(\rho) = 1,
\end{array}
\label{eq:new_sdp}
\end{equation}%
\noindent{}where $J$ is the Choi-Jamiolkowski isomorphism \cite{CHOI1975285} and $\Phi = \mathcal{U} - \mathcal{E}.$
\end{theorem}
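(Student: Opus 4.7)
The plan is to derive the SDP by combining Watrous's standard diamond-norm SDP with a convex relaxation of the trace-distance constraint $\|\rho_{\text{in}} - \hat\rho\|_1 \le \delta$. Since the superoperator $\Phi = \mathcal{U} - \mathcal{E}$ acts only on the qubits touched by the gate, the optimization needs only to track the reduced (``local'') density matrices $\rho'$ of $\hat\rho$ and $\rho$ of $\rho_{\text{in}}$ on those qubits, which is what gives the constant-size SDP.

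First, I would recall that for any fixed $\rho$ with $\text{tr}(\rho)=1$, Watrous's characterization gives
\[
\tfrac{1}{2}\bigl\|\Phi \otimes \mathcal{I}(\rho)\bigr\|_1 \;=\; \max_{0 \preceq W \preceq I \otimes \rho}\ \text{tr}\bigl(J(\Phi)\, W\bigr),
\]
so taking the max over admissible inputs converts the $(\hat\rho, \delta)$-diamond norm into a joint maximization over $W$ and $\rho$ with the constraints $W \succcurlyeq 0$, $I \otimes \rho \succcurlyeq W$, $\rho \succcurlyeq 0$, $\text{tr}(\rho)=1$, together with a constraint enforcing $\|\rho_{\text{in}} - \hat\rho\|_1 \le \delta$. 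By monotonicity of the trace norm under partial trace, this latter constraint implies $\|\rho - \rho'\|_1 \le \delta$ on the local states, so it suffices to impose the bound at the local level.

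Next, I would convert the nonlinear trace-norm bound into the linear SDP-compatible constraint stated in the theorem via a Cauchy--Schwarz argument in the Hilbert--Schmidt inner product. Concretely,
\[
\|\rho'\|_F^2 - \text{tr}(\rho' \rho) \;=\; \text{tr}\bigl(\rho'(\rho' - \rho)\bigr) \;\le\; \|\rho'\|_F\,\|\rho' - \rho\|_F \;\le\; \|\rho'\|_F\,\|\rho' - \rho\|_1 \;\le\; \|\rho'\|_F\, \delta,
\]
using $\|\cdot\|_F \le \|\cdot\|_1$ for the Schatten norms. Rearranging yields $\text{tr}(\rho' \rho) \ge \|\rho'\|_F(\|\rho'\|_F - \delta)$, the extra constraint in Equation~\eqref{eq:new_sdp}. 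Every feasible input to the $(\hat\rho,\delta)$-diamond norm is therefore feasible for the SDP, so the SDP's optimum provides a sound upper bound, yielding the desired characterization when combined with Watrous's formulation of the unconstrained case.

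The main obstacle is the constraint translation step: the original trace-distance constraint is nonlinear and not directly expressible as a semidefinite constraint, so it must be weakened to a linear inner-product form. The Cauchy--Schwarz chain above is the tightest linear relaxation available using only $\|\rho'\|_F$ and $\delta$, and justifying that this linearization preserves soundness (while keeping the SDP of constant size in the number of qubits) is the key technical content. Care must also be taken that the reduction to local states commutes with the Choi--Jamio\l kowski representation of $\Phi$, which holds because $J(\Phi)$ is supported on the qubits the gate acts on together with the reference system, making $\text{tr}(J(\Phi)W)$ depend on $\rho$ only through its local reduction.
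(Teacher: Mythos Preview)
Your proposal is correct and follows essentially the same route as the paper: reduce to local density matrices via monotonicity of the trace norm under partial trace, then relax the trace-distance constraint to the linear inequality $\text{tr}(\rho'\rho)\ge\|\rho'\|_F(\|\rho'\|_F-\delta)$ using Cauchy--Schwarz and $\|\cdot\|_F\le\|\cdot\|_1$, and plug this into the Watrous-style SDP. The only cosmetic difference is that the paper routes the final step through the $(Q,\lambda)$-diamond norm SDP of Hung et al.\ rather than citing Watrous directly, but the underlying argument is identical.
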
%

\begin{proof}
Given $(\hat\rho, \delta)$, we know that $\|\hat\rho - \rho_{\text{in}}\|_1 \le \delta$, where $\rho_{\text{in}}$ is the real input state.
Let $\rho'$ and $\rho$ be the local density operator of $\hat\rho$ and $\rho_{\text{in}}$. Because partial trace can only decrease the trace norm, we know that
\begin{align*}
    \| \rho' - \rho \|_1 \le \|\hat\rho - \rho_{\text{in}}\|_1 \le \delta.
\end{align*}
For a matrix $\rho$, let $\|\rho \|_F$ be the Frobenius norm which is the square root of the sum of the squares of all elements in a matrix. Because $\|\rho \|_F \le \| \rho \|_1$ for all $\rho$, we know that $ \| \rho' - \rho \|_F \le \delta$.
Then, we have
\begin{align*}
    \text{tr} (\rho' \rho) &= \text{tr}(\rho^2) + \text{tr}((\rho' - \rho)\rho) \\
    &= \| \rho' \|_F^2 +  \text{tr}((\rho' - \rho)\rho) \\
    &\ge \| \rho' \|_F^2 - \|\rho'\|_F \|\rho' - \rho\|_F \\
    &\ge \| \rho' \|_F(\| \rho' \|_F - \delta),
\end{align*}
where the third step holds  because of the Cauchy-Schwarz inequality.

Because the $(Q, \lambda)$-diamond norm can be solved by the following SDP by adding a constraint $ \text{tr} (Q\rho) \ge \lambda$ \cite{hung2019}:
\begin{equation*}
\begin{array}{ll@{}ll}
\text{maximize}  & \mathrm{tr}(J(\Phi)W) \\
\text{subject to}&  I \otimes \rho \succcurlyeq W  \\
&{ \text{tr} (Q \rho) \ge \lambda }\\
& W \succcurlyeq 0,\ \rho \succcurlyeq 0,\ \mathrm{tr}(\rho) = 1.
\end{array}
\end{equation*}
$(\hat\rho, \delta)$-diamond norm can thus be calculated by the SDP \eqref{eq:new_sdp}.
\end{proof}

Let the solved, optimal value of SDP in Equation~\eqref{eq:new_sdp} be $\epsilon$.
We conclude that the $(\hat\rho, \delta)$-diamond norm must be bounded by $\epsilon$, i.e.,
\[
    \|\Phi \|_{ (\hat{\rho}, \delta)} \le \epsilon.
\]

\para{SDP size}
The size of the SDP in Equation~\eqref{eq:new_sdp} is exponential with respect to
the maximum number of quantum gates' input qubits.
Since near-term (NISQ) quantum computers are unlikely to support
quantum gates with greater than two input qubits, 
we can treat the size of the SDP problem as a \emph{constant},
for the purposes of discussing its running time.
\fix{Because the running time of solving an SDP scales polynomially with the size of the SDP, the running time to calculate
$(\hat\rho, \delta)$-diamond norm can be seen as a constant.}

\para{Computing local density matrix}
The local density matrix 
\fix{(also known as reduced density matrix \cite{NielsenChuang})} %
represents the local information of a quantum state.
It is defined using a partial trace on the (global) density 
for the part of the state we want to observe.
For example, the local density operator
on the first qubit of $\frac{\ket{00} + \ket{11}}{\sqrt{2}}$
is \fix{$\frac{1}{2} \begin{psmallmatrix} 1 & 1 \\ 1 & 1 \end{psmallmatrix}$},
meaning that the first qubit of the state is half $\ket 0$ and half $\ket 1$.

In Equation~\eqref{eq:new_sdp}, we need to compute the local density matrix $\rho'$ of $\hat\rho$
about the qubit(s) that the noise represented by $\Phi$ acts on.
$\hat\rho$ is represented by an MPS.
The calculation of a local density operator of an MPS works similarly to how we calculate inner products,
except the wire $i_k$ where $k$ is a qubit that we want to observe.

\section{Evaluation}
\label{sec:evaluation}
\begin{table*}[t]
\centering
\setlength{\tabcolsep}{4pt}
\addtolength{\leftskip} {-2cm} %
    \addtolength{\rightskip}{-2cm}
\begin{tabu}{c|cc|cc|cc|c}
     & Qubit  & Gate  & \framework{} bound & Running &  LQR \cite{hung2019} with  & Running &Worst-case  \\[-3pt]
     Benchmark & number & \fix{count} &  ($\times 10^{-4}$) & time (s) &  full simulator ($\times 10^{-4}$) &time (s)& bound ($\times 10^{-4}$) \\
    \hline
    \texttt{QAOA\_line\_10} & 10 & 27& 0.05 & 2.77 & 0.05 & 215.2 & 27 \\[-3pt]
    \texttt{Isingmodel10} & 10 & 480 & 335.6 & 31.6 & 335.6 & 4701.8 & 480\\[-3pt]
    \texttt{QAOARandom20} & 20 & 160& 136.6 & 19.8 & - &(timed out) &160 \\[-3pt]
    \texttt{QAOA4reg\_20} & 20 & 160 & 138.8 & 12.5 & - &(timed out) &160 \\[-3pt]
    \texttt{QAOA4reg\_30} & 30 & 240 & 207.0 & 25.8 & - &(timed out) &240\\[-3pt]
    \texttt{Isingmodel45} &45 & 2265 & 1739.4 & 338.0 & - &(timed out) &2265\\[-3pt]
    \texttt{QAOA50} &50 & 399 & 344.1 & 58.7 & - &(timed out) &399\\[-3pt]
    \texttt{QAOA75} &75 & 597 & 517.2 & 113.7 & - &(timed out) & 597\\[-3pt]
    \texttt{QAOA100} &100 & 677 & 576.7 & 191.9 & - &(timed out) &677\\
\end{tabu}
\vspace*{0.2cm}
\caption{
    Experimental results of \framework{} ($w=128$) and the baseline on different quantum programs,
    showing the bounds given by \framework{}'s $(\hat\rho, \delta)$-diamond norm,
    the $(Q,\lambda)$-diamond norm with full simulation,
    and \fix{the worst case bound given by} unconstrained diamond norm.
    Experiments time out if they run for longer than 24 hours.
    Note that the worst case bound is directly proportional to the number of gates.
}
\label{tab:evaluation}
\end{table*}

This section evaluates  \framework{} on \change{a set of realistic near-term quantum programs}.
We compare the bounds given by \framework{} to the bounds given by other methods,
as well as the error we experimentally measured from an IBM's real quantum device.
All approximations and full simulations are performed on
an Ubuntu 18.04 server with
an Intel Xeon W-2175 (28 cores @ 4.3 GHz),
62 GB memory, and a 512 GB Intel SSD Pro 600p.

\subsection{Evaluating the computed error bounds}
We evaluated \framework{} on several important quantum programs,
under a sample noise model containing the most common type of quantum noises.
We compared the bounds produced by \framework{}
with the LQR's $(Q, \lambda)$-diamond norm with full simulation 
and the worse-case bounds given by the unconstrained diamond norm.

\para{Noise model}
In our experiments, our quantum circuits are configured such that
each noisy 1-qubit gate has a bit flip ($X$)
with probability $p = 10^{-4}$:
\[
    \Phi(\rho) = (1-p)\rho + p X \rho X.
\]
Each 2-qubit gate also has a bit flip %
on its first qubit.

\para{Framework configuration}
For the approximator, we can adjust the size of the MPS network,
depending on available computational resources;
the larger the size, the tighter error bound.
In \change{all} experiments, we use an MPS of size 128.

\para{Baseline}
To evaluate the error bounds given by \framework{},
we first compared them with the worst-case bounds calculated using the  unconstrained diamond norm (see Section~\ref{sec:qerr}).
For each noisy quantum gate, we  compute its unconstrained diamond norm distance to the perfect gate
and obtain the worst-case bound by summing all unconstrained diamond norms.
The unconstrained diamond norm distance of a bit-flipped gate and a perfect gate is given by:
\begin{align*}
    \|\Phi - I \|_{\diamond} \quad &= \quad \| (pX\circ X + (1-p)I) - I \|_{\diamond} \\
                                   &= \quad p \| X \circ X - I \|_{\diamond} \\
                                   &= \quad p,
\end{align*}
where $X \circ X$ denotes the function that maps $\rho$ to $X \rho X$.
Therefore, the total noise is bounded by $np$,
where $n$ is the number of noisy gates, due to additivity of diamond norms.
Because every gate has a noise, the worst case bound \fix{produced by unconstrained diamond norm} is simply proportional to the number of gates in the program.

We also compare our error bound with what we obtain from LQR~\cite{hung2019}
using a full quantum program simulator to generate best quantum predicate.
This approach's running time is exponential to the number of qubits
and times out (runs for longer than 24 hours) on programs with $\ge 20$ qubits.

\para{Programs}
We analyzed two classes of quantum programs that are expected to be most useful in the near-term,
namely:
\begin{itemize}
  \setlength\itemsep{0.5em}
\item The \emph{Quantum Approximate Optimization Algorithm (QAOA)}~\cite{farhi2014quantum}
    that can be used to solve combinatorial optimization problems.
    We use it to find the max-cut for various graphs with qubit sizes from 10 to 100.

\item The \emph{Ising model}~\cite{google2020hartree}---a thermodynamic model for magnets widely used in quantum mechanics.
    We run the Ising model with sizes 10 and 45.
\end{itemize}

\begin{figure}[t]
    \centering
    \includegraphics[width=.85\columnwidth]{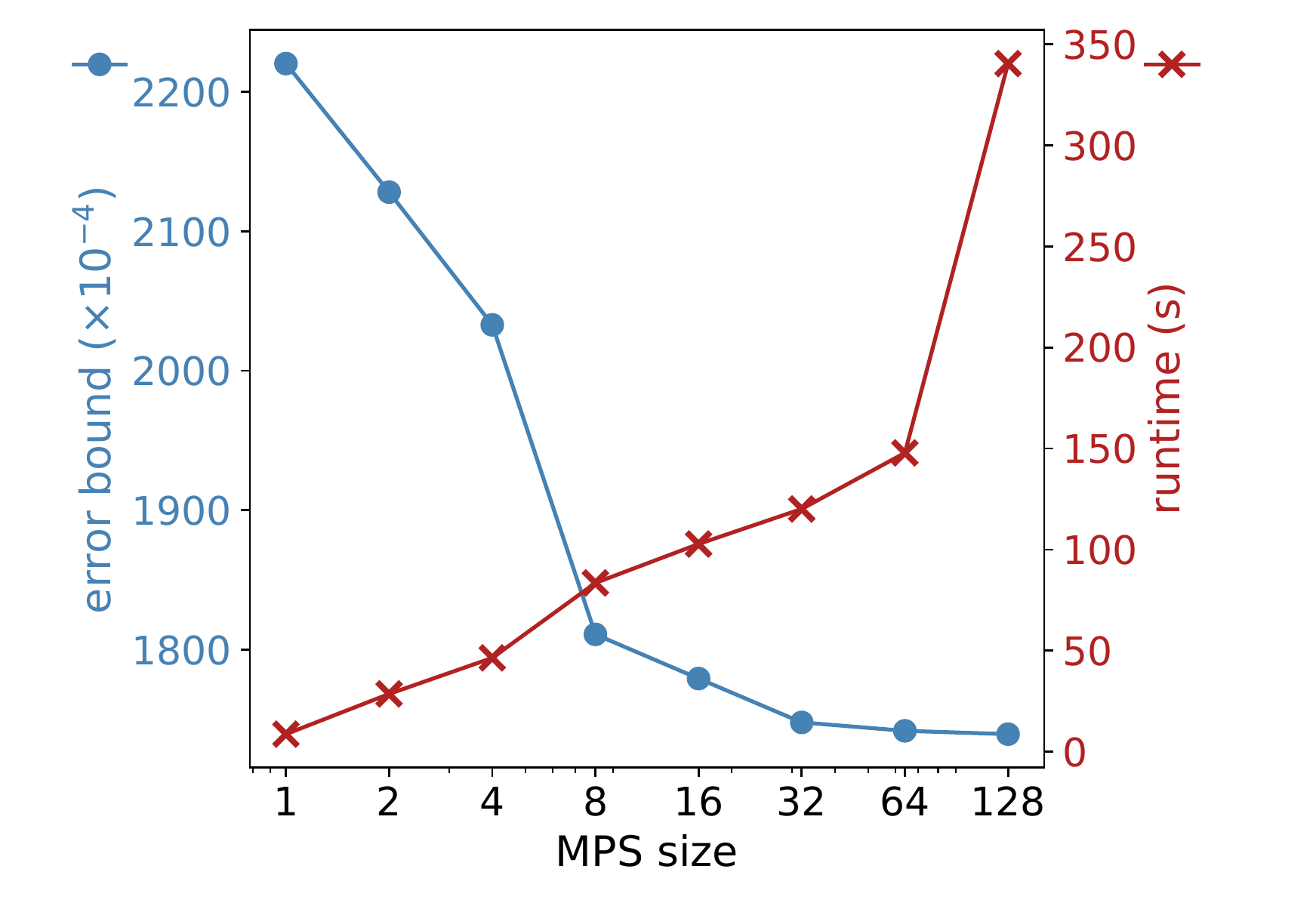}
    \caption{The error bounds and runtimes  of \framework{} on the program \texttt{Isingmodel45} with different MPS sizes.}
    \label{fig:variable-mps-size}
\end{figure}

\para{Evaluation}
\cref{tab:evaluation} presents the evaluation results.
We can see  that 
\framework{}'s bounds are $15\% \sim 30\%$ tighter than 
what the unconstrained diamond norm gives,
on large quantum circuits with qubit sizes $\ge 20$.
On small qubit-size circuits, our bound is as strong as the exponential-time
method based on full simulation.

We also evaluated how MPS size impacts the performance of \framework{}.
As we can see  for the  \texttt{Isingmodel45} program (see  Figure~\ref{fig:variable-mps-size}), larger MPS sizes result in tighter error bounds, at the cost of longer running times,
with marginal returns beyond a certain size.
We found that MPS networks with a size of 128 performed best for our candidate programs,
though in general, this parameter can be adjusted according to
precision requirements and the availability of computational resources.
As the MPS size grows, floating point errors become more significant,
so higher precision representations are necessary for larger MPS sizes.
Note that one cannot feasibly compute the precise error bound
of the \texttt{Isingmodel45} program,
since that requires computing the $2^{45} \times 2^{45}$ matrix representation of the program's output.

\subsection{Evaluating the quantum compilation error mitigation}

\begin{figure}[t]
    \centering
    \includegraphics[width=\columnwidth]{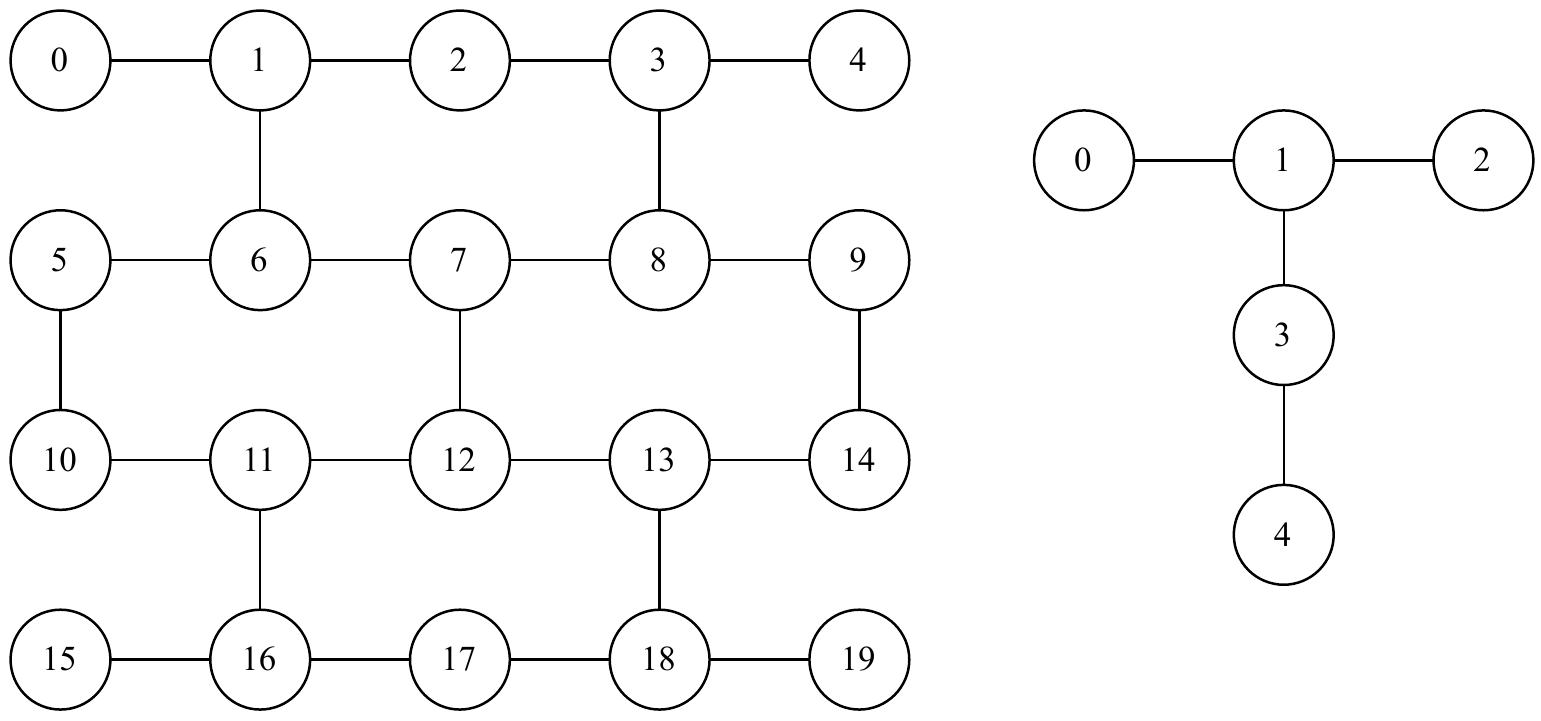}
    \caption{The coupling map of the IBM Boeblingen (left) and Lima (right) quantum computer, where each node represents a qubit.
    Only qubit pairs with a connecting edge can be used to implement a 2-qubit gate.
    }
    \label{fig:device}
\end{figure}

To demonstrate that \framework{} can be used
to evaluate the error mitigation performance of quantum compilers for
real quantum computers today,
we designed an experiment based on the noise-adaptive qubit mapping problem~\cite{8942132, murali2019noiseadaptive}.
When executing a quantum program on a real quantum computer, a quantum compiler must decide
which physical qubit
each logical qubit should be mapped to,
in accordance with the quantum computer's coupling map (e.g., Figure~\ref{fig:device}).
Since quantum devices do not have uniform noise across qubits,
a quantum compiler's mapping protocol should aim to map qubits such that
the quantum program is executed with as little noise as possible.

\begin{figure}
    \centering
\begin{subfigure}[b]{.2\textwidth}
\centering
    \begin{tikzpicture}[scale=0.75]
        \tikzstyle{operator} = [draw,fill=blue!30] 
        \tikzstyle{phase} = [fill,shape=circle,minimum size=5pt,inner sep=0pt]
        \tikzstyle{surround} = [fill=blue!10,thick,draw=black,rounded corners=2mm]
        \tikzset{XOR/.style={draw,circle,append after command={
            [shorten >=\pgflinewidth, shorten <=\pgflinewidth,]
            (\tikzlastnode.north) edge (\tikzlastnode.south)
            (\tikzlastnode.east) edge (\tikzlastnode.west)
            }
            
        }
    }
        \node at (0,0.5) (q1) {$\ket{0}$};
        \node at (0,-0.5) (q2) {$\ket{0}$};
        \node at (0,-1.5) (q3) {$\ket{0}$};
        \node[operator] (op11) at (1,0.5) {$H$} edge [-] (q1);
        \node[phase] (phase11) at (2,0.5) {} edge [-] (op11);
        \node[XOR] (phase12) at (2,-0.5) {} edge [-] (q2);
        \draw[-] (phase11) -- (phase12);
        \node[phase] (phase22) at (3,-0.5) {} edge [-] (phase12);
        \node[XOR] (phase23) at (3,-1.5) {} edge [-] (q3);
        \draw[-] (phase22) -- (phase23);
        \node (end1) at (4,0.5) {} edge [-] (phase11);
        \node (end2) at (4,-0.5) {} edge [-] (phase12);
        \node (end3) at (4,-1.5) {} edge [-] (phase23);
    \end{tikzpicture}
\end{subfigure}\centering
\begin{subfigure}[b]{.2\textwidth}
\centering
    \begin{tikzpicture}[scale=0.75]
        \tikzstyle{operator} = [draw,fill=blue!30] 
        \tikzstyle{phase} = [fill,shape=circle,minimum size=5pt,inner sep=0pt]
        \tikzstyle{surround} = [fill=blue!10,thick,draw=black,rounded corners=2mm]
        \tikzset{XOR/.style={draw,circle,append after command={
            [shorten >=\pgflinewidth, shorten <=\pgflinewidth,]
            (\tikzlastnode.north) edge (\tikzlastnode.south)
            (\tikzlastnode.east) edge (\tikzlastnode.west)
            }
            
        }
    }
        \node at (0,0.5) (q0) {$\ket{0}$};
        \node at (0,0) (q1) {$\ket{0}$};
        \node at (0,-0.5) (q2) {$\ket{0}$};
        \node at (0,-1) (q3) {$\ket{0}$};
        \node at (0,-1.5) (q4) {$\ket{0}$};
        \node[operator] (op11) at (0.7,0.5) {$H$} edge [-] (q0);
        \node[phase] (phase11) at (1.4,0.5) {} edge [-] (op11);
        \node[XOR] (phase12) at (1.4,0) {} edge [-] (q1);
        \draw[-] (phase11) -- (phase12);
        \node[phase] (phase22) at (2.1,0) {} edge [-] (phase12);
        \node[XOR] (phase23) at (2.1,-0.5) {} edge [-] (q2);
        \draw[-] (phase22) -- (phase23);
        \node[phase] (phase31) at (2.8,0) {} edge [-] (phase22);
        \node[XOR] (phase33) at (2.8, -1) {} edge [-] (q3);
        \draw[-] (phase31) -- (phase33);
        
        \node[phase] (phase43) at (3.5,-1) {} edge [-] (phase33);
        \node[XOR] (phase44) at (3.5, -1.5) {} edge [-] (q4);
        \draw[-] (phase43) -- (phase44);
        \node (end1) at (4.2,0.5) {} edge [-] (phase11);
        \node (end2) at (4.2,0) {} edge [-] (phase12);
        \node (end3) at (4.2,-0.5) {} edge [-] (phase23);
        \node (end4) at (4.2,-1) {} edge [-] (phase33);
        \node (end5) at (4.2,-1.5) {} edge [-] (phase44);
    \end{tikzpicture}
\end{subfigure}
    \caption{The GHZ-3 circuit (left) and the GHZ-5 circuit (right).}
    \vspace{10pt}
    \label{fig:ghz_experiment}
\end{figure}
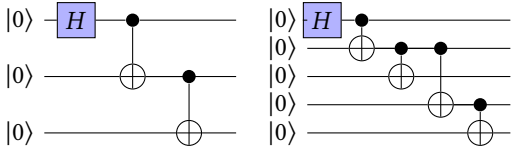

\para{Experiment design}
We compared three different qubit mappings of the
3-qubit GHZ (GHZ-3) circuit and the 5-qubit GHZ circuit (GHZ-5) (see Figure~\ref{fig:ghz_experiment}):
$q_0-q_1-q_2$, $q_1-q_2-q_3$, and $q_2-q_3-q_4$ for GHZ-3, and
$q_0-q_1-q_2-q_3-q_4$ and $q_2-q_1-q_0-q_3-q_4$ for GHZ-5,
where $q_i$ represents the $i$th physical qubit.
As the baseline, we  ran our circuit on a real quantum computer with each qubit mapping
and measured the output to obtain a classical probability distribution.
We computed the measured error
by taking the statistical distance of this distribution from
the distribution of the ideal output state $(\ket{000} + \ket{111})/\sqrt{2}$ and $(\ket{00000} + \ket{11111})/\sqrt{2}$.
We then used \framework{} to compute the noise bound for each mapping,
based on our quantum computer's noise model.
Because the trace distance represents the maximum possible statistical distance
of any measurement on two quantum states (see Section~\ref{sec:qerr}),
the statistical distance we computed should be bounded by
the trace distance computed by \framework{}.

\para{Experiment setup}
We conducted our experiment using the IBM Quantum Experience\cite{IBMQ} platform
and ran  our quantum programs
with the IBM Boeblingen 20-qubit device (see Figure~\ref{fig:device}).
Because \framework{} needs a noise model to compute its error bound,
we constructed a model for the device
using publicly available data from IBM~\cite{IBMQ}
in addition to measurements from tests we ran on the device.

\para{Results}
Our experimental results are shown in \cref{tab:mapping}.
We can see that \framework{}'s bounds are consistent with the real noise levels
and successfully predict the ranking of  noise levels for different mappings.
As for GHZ-3,
the $1-2-3$ mapping has the least noise, while $0-1-2$ has the most.
\framework{}'s bounds are also consistent with the real noise levels
for GHZ-5.
This illustrates how \framework{} can be used to help guide
the design of noise-adaptive mapping protocols---\fix{users can run \framework{} with different mappings and choose the best mapping according to error bounds given by \framework{}.}
In contrast,
the worst case bounds given by the unconstrained diamond norm 
are always $1$ for all five different mappings,
which is not helpful for determining the best mapping.

\begin{table}[t]
    \centering
    \setlength{\tabcolsep}{4pt}
\addtolength{\leftskip} {-2cm} %
    \addtolength{\rightskip}{-2cm}
    \begin{tabular}{c|c|cc}
        Circuit & Mapping & \framework{} bound & Measured error\\
        \hline
        GHZ-3 & 0-1-2 & 0.211  & 0.160\\[2pt]
        GHZ-3 & 1-2-3 & 0.128  & 0.073\\[2pt]
        GHZ-3 & 2-3-4 & 0.162  & 0.092\\[2pt]
        \hline
        GHZ-5 & 0-1-2-3-4 & 0.471 & 0.176\\[2pt]
        GHZ-5 & 2-1-0-3-4 & 0.449 & 0.171\\
    \end{tabular}
    \vspace{5pt}
    \caption{
        Error bounds generated by \framework{} on different mappings
        compared with the errors we measured experimentally
        using the IBM  Boeblingen 20-qubit device.
    }
    \label{tab:mapping}
\end{table}
\section{Related Work}

\para{Error bounding quantum programs}
Robust projective quantum Hoare logic~\cite{zhou2019} is an extension of
Quantum Hoare Logic that supports error bounding using the worst-case diamond norm.
In contrast, \framework{} uses the more fine-grained $(\hat\rho, \delta)-$diamond norm
to provide tighter %
error bounding.

LQR~\cite{hung2019} is a framework for
formally reasoning
about quantum program errors, 
using the $(Q,\lambda)$-diamond norm as its error metric.
LQR supports the reasoning about quantum programs 
 with more advanced
quantum computing features, such as quantum loops.
However, LQR does not specify any practical method for obtaining non-trivial quantum predicates.
{In contrast, \framework{}, for the first time,
introduces a practical and adaptive method to compute quantum program predicates, i.e., $(\hat{\rho}, \delta)$ predicates, using the $TN$ algorithm.
}

As we have shown in Section~\ref{sec:constrainedSDP}, our  $(\hat{\rho}, \delta)$ predicates
can be reduced to LQR's $(Q, \lambda)$ predicates.
In other words, our quantum error logic
can be understood as a refined implementation of LQR.
$(\hat{\rho}, \delta)$ predicates computed using \framework{}
can be used to obtain non-trivial postconditions
for the quantum Hoare triples required by LQR's sequence rule.
By the soundness of our $TN$ algorithm,
the computed predicates are guaranteed to be valid postconditions.

\para{Error simulation}
Contemporary error simulation methods can be roughly divided into two classes:
(1) direct simulation methods
based on solving Schr\"{o}dinger's equation or the master equation~\cite{markov1995}---%
neither of which scales beyond a few qubits~\cite{Pashayan2017FromEO}---and
(2) approximate methods,
based on either Clifford circuit approximation~\cite{Guti_rrez_2013,Magesan_2013,krbrown_2016, Bravyi_2019}
or classical sampling methods with Monte-Carlo
simulations~\cite{Rauendorf2019PhaseSS,Veitch2012NegativeQA,Mari_2012,Veitch2013EfficientSS}.
These methods are efficient but only work on specific classes 
of quantum circuits such as circuits and noises represented by positive Wigner functions or Clifford gates.
In contrast, \framework{} can be applied to general quantum circuits
and scales well beyond 20 qubits.

\para{Resource estimation beyond error}
Quantum compilers such as Qiskit Terra~\cite{Qiskit} and ScaffCC~\cite{JavadiAbhari2015ScaffCCSC}
perform entanglement analysis for quantum programs.
The QuRE~\cite{qure} toolbox provides coarse-grained resource estimation
for fault-tolerant implementations of quantum algorithms.
On the theoretical side, quantum resource theories also consider the estimation of 
coherence~\cite{Streltsov2017ColloquiumQC,Winter2016OperationalRT},
entanglement~\cite{plenio2005introduction,PhysRevA.74.012305},
and magic state stability~\cite{Howard_2017,wang2019quantifying,Veitch_2014}.
However, these frameworks directly use the matrix representation of quantum states and
do not work for %
quantum programs with more than 20 qubits
that can be  handled by \framework{}.

\para{Verification of quantum compilation}
CertiQ~\cite{certiq} is an automated framework to  verify
that the quantum compiler passes and optimizations 
preseve the semantics of quantum circuits. 
VOQC~\cite{voqc} is a formally verified optimizer for quantum circuits.
These works focus on quantum compilation correctness
and do not consider noise models or error-mitigation performance.
In contrast, 
Gleipnir focuses on the error anaylysis of quantum programs
and can be used to evaluate the error-mitigation performance in quantum compilations.

\para{Tensor network quantum approximation}
\fix{MPS and general tensor networks are mostly used in the exact evaluation of quantum programs. 
The only application of MPS for approximating quantum systems is the density matrix renormalization group (DMRG) method in quantum chemistry \cite{r2}. Although both DMRG and Gleipnir use MPS to represent approximate quantum states, the approximation methods are different.
DMRG can only be used to simulate quantum many-body systems,
while Gleipnir’s approach works for general programs and can provide the error bounds of the approximate states, which are used in the quantum error logic to compute the error bounds of quantum programs.}

Multi-dimensional tensor networks such as
PEPS~\cite{jordan2008classical} and MERA~\cite{giovannetti2008quantum}
may model quantum states more precisely than MPS. %
However, they are computationally impractical.
Contracting higher-dimensional tensor networks involves tensors with orders greater than four,
which are prohibitively expensive to manipulate.

\section{Conclusion}
We have presented \framework, a methodology for computing
verified error bounds of quantum programs
and evaluating the error mitigation performance of quantum compiler transformations.
Our experimental results show that \framework{} provides %
tighter error bounds
for quantum circuits with qubits ranging from 10 to 100, \fix{compared with the worst case bound},
and the generated error bounds are consistent with the noise-levels
measured using real quantum devices.

\section{Acknowledgement}
We thank our shepherd, Timon Gehr, and the anonymous reviewers for valuable feedbacks that help improving this paper.
We thank Xupeng Li and Shaokai Lin for conducting parts of the experiments.
We thank members of the VeriGu Lab at Columbia and anonymous
referees for helpful comments and suggestions that improved this
paper and the implemented tools.
This work is funded in part by NSF grants CCF-1918400 and CNS-2052947;
an Amazon Research Award;
EPiQC, an NSF Expedition in Computing,
under grants CCF-1730449; STAQ under grant
NSF Phy-1818914;  DOE grants DE-SC0020289 and DESC0020331;
and NSF OMA-2016136 and the Q-NEXT
DOE NQI Center.
Any opinions, findings, conclusions, or recommendations that
are expressed herein are those of the authors, and do not
necessarily reflect those of the US Government, NSF, DOE, 
or Amazon.

\bibliography{reference}
 \begin{appendices}
 \section{Soundness Proof of Quantum Error Logic}
\label{subsec:soundness}

In this section, we prove that our quantum error logic can be used to soundly reason
about the error bounds of quantum programs. For clarity, we omit the parentheses when applying the denotational semantics to a state, i.e., 
$$[\![P]\!]\rho := [\![P]\!](\rho).$$

\begin{theorem}[Soundness of the quantum error logic]
If $(\hat{\rho}, \delta)\vdash\widetilde{P}_\noise\le\epsilon$,
then for all density matrices $\rho$ such that
$\|\rho - \hat{\rho}\|_1 \le \delta$:
\[
    \frac{1}{2}\big\|[\![\widetilde{P}_\noise]\!]\rho - [\![P]\!]\rho\big\|_1 \le \epsilon
\]
\end{theorem}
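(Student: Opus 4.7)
The plan is to proceed by structural induction on the derivation of the judgment $(\hat{\rho}, \delta)\vdash\widetilde{P}_\noise\leq\epsilon$, doing a case analysis on the five inference rules in Figure~\ref{fig:rules}. The base cases are \textsc{Skip} and \textsc{Gate}. The \textsc{Skip} case is immediate since both $[\![\widetilde{\texttt{skip}}_\noise]\!]\rho$ and $[\![\texttt{skip}]\!]\rho$ equal $\rho$, so the trace distance is zero. The \textsc{Gate} case follows directly from the definition of the $(\hat{\rho}, \delta)$-diamond norm: the premise $\|\widetilde{\mathcal{U}}_\noise - \mathcal{U}\|_{(\hat\rho,\delta)} \le \epsilon$ is exactly the supremum of $\frac{1}{2}\|\widetilde{\mathcal{U}}_\noise\otimes\mathcal{I}(\rho) - \mathcal{U}\otimes\mathcal{I}(\rho)\|_1$ over admissible $\rho$, so instantiating with the given $\rho$ (and the trivial identity on no ancilla) yields the required bound. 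The \textsc{Weaken} case is also immediate: if $\|\rho - \hat\rho\|_1 \le \delta \le \delta'$, then $\rho$ satisfies the stronger precondition, so the inductive hypothesis gives $\frac{1}{2}\|[\![\widetilde{P}_\noise]\!]\rho - [\![P]\!]\rho\|_1 \le \epsilon' \le \epsilon$.

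The first substantive case is \textsc{Seq}. Fix $\rho$ with $\|\rho - \hat\rho\|_1 \le \delta$. I would decompose the difference using the triangle inequality by inserting the hybrid term $[\![\widetilde{P}_{2\noise}]\!][\![P_1]\!]\rho$:
\begin{align*}
\tfrac{1}{2}\bigl\|[\![\widetilde{P}_{2\noise}]\!][\![\widetilde{P}_{1\noise}]\!]\rho - [\![P_2]\!][\![P_1]\!]\rho\bigr\|_1 &\le \tfrac{1}{2}\bigl\|[\![\widetilde{P}_{2\noise}]\!][\![\widetilde{P}_{1\noise}]\!]\rho - [\![\widetilde{P}_{2\noise}]\!][\![P_1]\!]\rho\bigr\|_1 \\
&\quad + \tfrac{1}{2}\bigl\|[\![\widetilde{P}_{2\noise}]\!][\![P_1]\!]\rho - [\![P_2]\!][\![P_1]\!]\rho\bigr\|_1.
\end{align*}
The first summand is bounded by $\epsilon_1$ because $[\![\widetilde{P}_{2\noise}]\!]$ is a CPTP map and hence contractive in trace norm, reducing it to the inductive hypothesis on $\widetilde{P}_{1\noise}$ applied to $\rho$. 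For the second summand, I need to apply the inductive hypothesis on $\widetilde{P}_{2\noise}$ to the state $[\![P_1]\!]\rho$, which requires checking $\|[\![P_1]\!]\rho - \hat\rho'\|_1 \le \delta + \delta'$. This follows by another triangle inequality through $[\![P_1]\!]\hat\rho$: the distance $\|[\![P_1]\!]\rho - [\![P_1]\!]\hat\rho\|_1 \le \delta$ by contractivity of $[\![P_1]\!]$ applied to the precondition, and $\|[\![P_1]\!]\hat\rho - \hat\rho'\|_1 \le \delta'$ by the soundness of the tensor network approximator (Theorem~\ref{thm:tn}) for input $\hat\rho$.

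The \textsc{Meas} case is the main obstacle because the collapsed states after measuring $\rho$ are not exactly $\hat\rho_0$ and $\hat\rho_1$; they are the collapses of $\rho$. The key observation is that trace distance upper-bounds the statistical distance between measurement outcomes, so the probability that a measurement on $\rho$ yields a different outcome than a measurement on $\hat\rho$ is at most $\delta$. I would split the analysis into two events: with probability at least $1-\delta$, the measurement behaves as on $\hat\rho$, the collapsed state matches $\hat\rho_0$ or $\hat\rho_1$ closely enough that the inductive hypothesis for each branch yields error at most $\epsilon$; with the remaining probability (at most $\delta$), we incur at most the maximal trace distance $1$. Summing gives $(1-\delta)\epsilon + \delta$. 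Making this probabilistic intuition rigorous via the semantics of \texttt{if}, which adds the two branch contributions weighted by measurement probabilities, is the most delicate step, and I expect to need an auxiliary lemma relating the trace-norm distance of the branch-weighted outputs for $\rho$ and $\hat\rho$ to $\delta$ in order to cleanly combine the per-branch bounds from the inductive hypotheses.
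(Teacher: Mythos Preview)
Your proposal is correct and follows essentially the same approach as the paper: structural induction on the derivation, with the identical hybrid decomposition via CPTP contractivity for \textsc{Seq} and the same probabilistic $(1-\delta)\epsilon + \delta$ split for \textsc{Meas}. The paper's \textsc{Meas} argument is no more rigorous than what you outline---it simply asserts that measurement projection preserves the $\delta$ bound on the collapsed states and then applies the per-branch hypotheses---so the auxiliary step you flag as delicate is exactly the step the paper also leaves informal.
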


\begin{proof}

We prove this statement by induction over the judgement $(\hat{\rho}, \delta)\vdash\widetilde{P}_\noise\le\epsilon$,
considering each of the five inference rules in Section~\ref{sec:logic}
that may have been used to logically derive the error bound.

\para{\textsc{Skip} rule}
Because the noisy and perfect semantics of $\texttt{skip}$ are both identity functions,
we know that, trivially,
$[\![\widetilde{P}_\noise]\!]\rho = [\![P]\!]\rho = \rho$
and $\frac{1}{2}\big\|[\![\widetilde{P}_\noise]\!]\rho,\ [\![P]\!]\rho\big\|_1 = 0$.

\para{\textsc{Weaken} rule}
By induction, we know that,
for some $\delta' \ge \delta$ and $\epsilon' \le \epsilon$:
\[
    \|\rho - \hat\rho\|_1 \le \delta'
    \quad \Rightarrow \quad
    \frac{1}{2}\big\|[\![\widetilde{P}_\noise]\!]\rho - [\![P]\!]\rho\big\|_1
        \le \epsilon' 
\]
If, for some fixed $\rho$, we have $ \mathrm{T}(\rho, \hat\rho) \le \delta$,
by $\delta' \ge \delta$, we know that $ \mathrm{T}(\rho, \hat\rho) \le \delta'$, which
implies $\mathrm{T}\big([\![\widetilde{P}_\noise]\!]\rho,\ [\![P]\!]\hat\rho \big) \le \epsilon' \le \epsilon$.
Thus:
\[
    \|\rho - \hat\rho\|_1 \le \delta
    \quad \Rightarrow \quad
    \frac{1}{2}\big\|[\![\widetilde{P}_\noise]\!]\rho - [\![P]\!]\rho\big\|_1 \le \epsilon.
\]

\para{\textsc{Gate} rule}
If the error bound was derived using the \textsc{Gate} rule,
we may use the definition of the $(\hat\rho,\delta)$-diamond norm defined in Section~\ref{sec:logic}
to show that:
{\small
  \setlength{\abovedisplayskip}{6pt}
  \setlength{\belowdisplayskip}{\abovedisplayskip}
  \setlength{\abovedisplayshortskip}{0pt}
  \setlength{\belowdisplayshortskip}{3pt}
\begin{align*}
&\frac{1}{2}\Big\|[\![\widetilde{U}_\noise(q_1, \ldots, q_k)]\!]\rho - [\![U(q_1, \ldots, q_k)]\!]\rho\Big\|_1 \\
    =&\      \frac{1}{2} \Big\|\widetilde{\mathcal{U}}_\noise(\rho) - \mathcal{U}(\rho)\Big\|_1 \\
    \le&\      \|\widetilde{\mathcal{U}}_\noise-\mathcal{U}\|_{(\hat\rho, \delta)} \\
    \le&  \   \epsilon 
\end{align*}
}

\para{\textsc{Seq} rule}
Let $\widetilde{P}_\noise := \widetilde{P}_{1\noise}; \widetilde{P}_{2\noise}$.
By the \textsc{Seq} rule, we have the following induction hypotheses:
\begin{eqnarray}
    \|\rho - \hat\rho_1\|_1 \le \delta_1
     &\Rightarrow \ 
    \frac{1}{2}\big\|[\![\widetilde{P}_{1\noise}]\!]\rho - [\![P_1]\!]\rho\big\|_1
        \le \epsilon_1 \nonumber
\\
    \|\rho - \hat\rho_2\|_1 \le \delta_1 + \delta_2
   &\Rightarrow \ 
    \frac{1}{2}\big\|[\![\widetilde{P}_{2\noise}]\!]\rho - [\![P_2]\!]\rho\big\|_1
        \le \epsilon_2
        \label{eq:delta2}
\end{eqnarray}
To prove $\frac{1}{2}\big\|[\![ \widetilde{P}_{1\noise}; \widetilde{P}_{2\noise}]\!]\rho - [\![P_1; P_2]\!]\rho\big\|_1  \leq \epsilon_1 + \epsilon_2$, we start from the left formula:
{\small
  \setlength{\abovedisplayskip}{6pt}
  \setlength{\belowdisplayskip}{\abovedisplayskip}
  \setlength{\abovedisplayshortskip}{0pt}
  \setlength{\belowdisplayshortskip}{3pt}
\begin{align}
 &\hspace{-7pt} \frac{1}{2}\|[\![ \widetilde{P}_{1\noise}; \widetilde{P}_{2\noise}]\!]\rho - [\![P_1; P_2]\!]\rho\|_1\nonumber \\
 = &     \ \nonumber
  \frac{1}{2}  \|[\![\widetilde{P}_{2\noise}]\!][\![\widetilde{P}_{1\noise}]\!]\rho-[\![P_2]\!][\![P_1]\!]\rho\|_1  \nonumber  \\
 \le&   \ \nonumber
   \frac{1}{2} \|[\![\widetilde{P}_{2\noise}]\!][\![\widetilde{P}_{1\noise}]\!]\rho-[\![\widetilde{P}_{2\noise}]\!][\![P_1]\!]\rho\|_1  \nonumber  \\
    & + 
    \frac{1}{2}\|[\![\widetilde{P}_{2\noise}]\!][\![P_1]\!]\rho-[\![P_2]\!][\![P_1]\!]\rho\|_1  \nonumber  \\
 \le&   \
    \frac{1}{2}\|[\![\widetilde{P}_{1\noise}]\!]\rho-[\![P_1]\!]\rho\|_1  \nonumber  \\
    &+
    \frac{1}{2}\|[\![\widetilde{P}_{2\noise}]\!][\![P_1]\!]\rho-[\![P_2]\!][\![P_1]\!]\rho\|_1
\label{eq:seq_rule}
\end{align}
}%
The last step holds because superoperators ($[\![\widetilde{P}_{2\noise}]\!]$ in our case)
are trace-contracting, and thus trace distance-contracting.
Thus, the error bound is bounded by the sum of two terms.
The first term is bound by $\epsilon_1$
since we have $(\hat\rho, \delta_1)\vdash\widetilde{P}_{1\noise}\le\epsilon_1$ and the induction hypothesis.
From the result of $\text{TN}(\hat\rho_1, P_1) = (\hat\rho_2, \delta_2)$,
we know that $T(\hat\rho_2,[\![P_1]\!]\hat\rho_1) \le \delta_2 $, then we have:
\begin{align*}
    \|\hat\rho_2 - [\![P_1]\!]\rho\|_1
    &\le    \|\hat\rho_2 - [\![P_1]\!]\hat\rho_1\|_1 + \|[\![P_1]\!]\hat\rho_1 - [\![P_1]\!]\rho\|_1 \\
    &\le    \delta_2 + \|\hat\rho_1 - \rho\|_1\\
    &\le    \delta_2 +  \delta_1 %
\end{align*}
Substituting $[\![P_1]\!]\rho$ in place of $\rho$ in Equation~\eqref{eq:delta2} implies:
\[
\frac{1}{2}\|[\![\widetilde{P}_{2\noise}]\!][\![P_1]\!]\rho - [\![P_2]\!][\![P_1]\!]\rho\|_1 \le \epsilon_2
\]
Hence, Equation~\eqref{eq:seq_rule} is bound by $\epsilon_1 + \epsilon_2$,
proving the soundness of the \textsc{Seq} rule.

\para{\textsc{Meas} rule}
Let $\widetilde{P}_\noise := \texttt{if } q = \ket{0} \texttt{then } \widetilde{P}_{0\noise} \texttt{ else } \widetilde{P}_{1\noise}$.
When measuring $\hat\rho$, the result is $0$ with probability $p_0$ and collapsed state $\hat\rho_0$,
and $1$ with probability $p_1$ and collapsed state $\hat\rho_1$;
when measuring $\rho$, the collapsed state is $\rho_{0}, \rho_{1}$.
Thus, we have:
{\small
  \setlength{\abovedisplayskip}{6pt}
  \setlength{\belowdisplayskip}{\abovedisplayskip}
  \setlength{\abovedisplayshortskip}{0pt}
  \setlength{\belowdisplayshortskip}{3pt}
\begin{align*}
     \frac{1}{2}\|[\![\widetilde{P}_\noise]\!]\rho - [\![P]\!]\rho\|_1  
    \quad\le\quad &(1-\delta)\Big(\frac{1}{2}p_0 \|[\![\widetilde{P}_{0\noise}]\!]\rho_{0}- [\![P_0]\!]\rho_0\|_1 \\
                  &+\frac{1}{2} p_1 \|[\![\widetilde{P}_{1\noise}]\!]\rho_{1}- [\![P_1]\!]\rho_1\|_1 \Big) + \delta \\
    \le \quad &(1-\delta) \epsilon + \delta
\end{align*}
}%
The first step means that the probability that the measurement result of $\rho$ and $\rho$
are different is at most $\delta$ by the definition of trace distance.
If the measurement results are the same, then the corresponding branch should be executed
and the noise level is the weighted average of both branches.
In each branch, the preconditions $\|\hat\rho_0 - \rho_0\|_1 \le \delta$ and $\|\hat\rho_1 - \rho_1\|_1 \le \delta$ still hold
because projection does not increase trace distance.
By our induction hypotheses,
$(\hat{\rho}_0, \delta)\vdash\widetilde{P}_{0\noise}\leq\epsilon$
and $(\hat{\rho}_1, \delta)\vdash\widetilde{P}_{1\noise}\leq\epsilon$
the error in each branch is bound by $\epsilon$,
and thus the total error is at most $(1-\delta) \epsilon+ \delta$.
\end{proof}

 \end{appendices}

\end{document}